\theoremstyle{plain}
\newtheorem{theorem}{Theorem}
\newtheorem{corollary}[theorem]{Corollary}
\newtheorem{lemma}[theorem]{Lemma}
\newtheorem{proposition}[theorem]{Proposition}
\theoremstyle{definition}
\newtheorem{definition}[theorem]{Definition}
\newtheorem{remark}[theorem]{Remark}
\newcommand{\rr}{\mathbb{R}}
\newcommand{\nn}{\mathbb{N}}
\theoremstyle{remark}
\author{Greg Bodwin and Lily Wang\\University of Michigan EECS\\ \texttt{\{bodwin, lilyxy\}@umich.edu}}
\date{}
\begin{document}
\title{Improved Shortest Path Restoration Lemmas for Multiple Edge Failures: Trade-offs Between Fault-tolerance and Subpaths\thanks{This work was supported by NSF:AF 2153680}}
\maketitle
\noindent

\begin{abstract}
The \emph{restoration lemma} is a classic result by Afek, Bremler-Barr, Kaplan, Cohen, and Merritt [PODC '01], which describes how the structure of shortest paths in a graph can change when some edges in the graph fail.
Their work shows that, after one edge failure, any replacement shortest path avoiding this failing edge can be partitioned into two pre-failure shortest paths.
More generally, this implies an \emph{additive} tradeoff between fault tolerance and subpath count: for any $f, k$, we can partition any $f$-edge-failure replacement shortest path into $k+1$ subpaths which are each an $(f-k)$-edge-failure replacement shortest path.
This generalized version of the result has found applications in routing, graph algorithms, fault tolerant network design, and more.

Our main result improves this to a \emph{multiplicative} tradeoff between fault tolerance and subpath count.
We show that for all $f, k$, any $f$-edge-failure replacement path can be partitioned into $O(k)$ subpaths that are each an $(f/k)$-edge-failure replacement path.
We also show an asymptotically matching lower bound.
In particular, our results imply that the original restoration lemma is exactly tight in the case $k=1$, but can be significantly improved for larger $k$.
We also show an extension of this result to weighted input graphs, and we give efficient algorithms that compute path decompositions satisfying our improved restoration lemmas.
\end{abstract}

\section{Introduction}

Suppose we want to route information, traffic, goods, or anything else along shortest paths in a distributed network.
In practice, network edges can be prone to \emph{failures}, in which a link is temporarily unusable as it awaits repair.
It is therefore desirable for a system to be able to adapt to these failures, efficiently rerouting paths on the fly into new replacement paths that avoid the currently-failing edges.
An algorithm that repairs a shortest path routing table following one or more edge failures is called a \emph{restoration algorithm} \cite{restoration}; the design of effective restoration algorithms forms a large and active body of work in both theory and practice, see e.g., \cite{liu2020survivability, ZGKLXZ21, WGYJ+15, restoration, tiebreaking} and references within.
The focus of this paper will specifically be on restoration algorithms that recover \emph{exact} shortest paths in the post-failure graph.

An ideal restoration algorithm will avoid recomputing shortest paths from scratch after each new failure event, instead leveraging its knowledge of the pre-failure shortest paths to speed up computation.
Therefore, when designing restoration algorithms, it is often helpful to understand exactly how shortest paths in a graph can evolve following edge failures.
A \emph{restoration lemma} is the general name for a structural result relating the form of pre-failure shortest paths to post-failure shortest paths in a graph, named for their applications in restoration algorithms.

The original restoration lemma was made explicit in a classic paper by Afek, Bremler-Barr, Kaplan, Cohen, and Meritt \cite{restoration}, and it was implicit in work before that, such as \cite{KIM82}.
All graphs in this discussion are undirected and unweighted, until otherwise indicated.

\begin{definition} [Replacement Paths]
A path $\pi$ in a graph $G = (V, E)$ is an \emph{$f$-fault replacement path} if there exists a set of edges $F \subseteq E, |F| \le f$ such that $\pi$ is a shortest path in the graph $G \setminus F$.
\end{definition}

\begin{theorem} [Original Restoration Lemma \cite{restoration}] \label{thm:baserestoration}
In any graph $G$, every $f$-fault replacement path can be partitioned into $f+1$ subpaths that are each a shortest path in $G$.
\end{theorem}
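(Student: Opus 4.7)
\emph{Proof plan.}
The proof is by induction on $f$. The base case $f = 0$ is immediate: a $0$-fault replacement path is by definition a shortest path in $G$, giving a partition into a single subpath.

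For the inductive step $f \ge 1$, let $\pi$ be an $f$-fault replacement path with fault set $F$, $|F| \le f$. If $\pi$ is already a shortest path in $G$, we are done. Otherwise, the plan is to pick a vertex $v$ on $\pi$ and an edge $e \in F$ so that (i) the prefix $\pi[s, v]$ is a shortest path in $G$, and (ii) the suffix $\pi[v, t]$ is a shortest path in $G \setminus (F \setminus \{e\})$, i.e., an $(f-1)$-fault replacement path. Given such $v$ and $e$, applying the inductive hypothesis to $\pi[v, t]$ partitions it into $f$ shortest-in-$G$ subpaths; prepending $\pi[s, v]$ then yields the desired partition of $\pi$ into $f+1$ subpaths.

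For (i), the natural choice is to let $v_{i^*}$ be the last vertex along $\pi$ for which $\pi[s, v_{i^*}]$ is a shortest path in $G$ (well-defined, since $v_0 = s$ trivially qualifies). Since $\pi$ is not itself shortest in $G$, we have $v_{i^*} \ne t$, and the edge $(v_{i^*}, v_{i^*+1})$ marks the first place where the shortest-path structure in $G$ breaks, meaning $d_G(s, v_{i^*+1}) < i^* + 1$. Fix a shortest $s$-to-$v_{i^*+1}$ path $Q$ in $G$. An exchange argument shows that $Q$ must traverse some edge of $F$: if not, concatenating $Q$ with $\pi[v_{i^*+1}, t]$ would yield an $s$-$t$ walk of length strictly less than $|\pi|$ entirely within $G \setminus F$, contradicting the optimality of $\pi$. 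Take $e \in Q \cap F$ as the candidate edge.

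The main obstacle is then to verify (ii): that $\pi[v_{i^*}, t]$ remains a shortest path when the failure set is reduced from $F$ to $F \setminus \{e\}$. Suppose toward contradiction that a strictly shorter $v_{i^*}$-to-$t$ path $R$ exists in $G \setminus (F \setminus \{e\})$. Then $R$ must traverse $e$, for otherwise $R$ would lie entirely in $G \setminus F$ and $\pi[s, v_{i^*}] \cdot R$ would give a strictly shorter $s$-$t$ walk in $G \setminus F$, again contradicting $\pi$. The remaining task is a splicing argument: by decomposing $Q$ and $R$ around their uses of $e$ and recombining appropriate segments with portions of $\pi$, one aims to produce an $s$-$t$ walk in $G \setminus F$ of length strictly less than $|\pi|$, yielding the desired contradiction. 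Making the splice go through in full generality --- handling both possible directions in which $Q$ and $R$ can traverse $e$, and in particular the case where $Q$ uses edges of $F$ beyond $e$ itself --- is the principal technical step of the proof.
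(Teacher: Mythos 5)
Your inductive framework has a genuine gap at step (ii), and it is not just that the final splicing is technically hard to complete: the claim you are trying to verify is false. With $v$ chosen as the last vertex whose prefix is shortest in $G$, the suffix $\pi[v,t]$ need not be a shortest path in $G \setminus (F \setminus \{e\})$ for any choice of $e \in Q \cap F$ --- in fact it need not be an $(f-1)$-fault replacement path for \emph{any} fault set whatsoever. The exchange argument you sketch really does require both the pre-fault and post-fault portions of $Q$ to be fault-free (that is how the $f=1$ case closes, by averaging two spliced walks), and once $Q$ may carry two or more faults it is the statement itself that breaks, not merely the proof technique.

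Concretely, take $f=2$. Let $\pi = u_0 u_1 \cdots u_{15}$ with $s=u_0$, $t=u_{15}$. Add vertices $q_1, q_2, \alpha, \beta, z$, two internally disjoint paths $W$ (from $\beta$ to $t$, length $6$) and $Y$ (from $\beta$ to $t$, length $7$), the paths $s\,q_1\,q_2\,\alpha$ and $u_5\,z\,\alpha$, and the two faults $e_1=(\alpha,\beta)$ and $e_2=(\beta,u_6)$; set $F=\{e_1,e_2\}$. In $G\setminus F$ the vertex $\beta$ is reachable only via $W$ and $Y$ (i.e.\ from $t$'s side), and the only other detour $s\,q_1\,q_2\,\alpha\,z\,u_5$ has length $5$, so $d_{G\setminus F}(s,t)=15$ and $\pi$ is a shortest path in $G\setminus F$. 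Your cut vertex is $v_{i^*}=u_5$: the prefix $\pi[s,u_5]$ has length $5=d_G(s,u_5)$, while $Q=s\,q_1\,q_2\,\alpha\,\beta\,u_6$ shows $d_G(s,u_6)=5<6$; note $Q$ traverses both faults. The suffix $\pi[u_5,t]$ has length $10$ but admits the shortcuts $u_5\,z\,\alpha\,\beta \circ W$ (length $9$, using only $e_1$) and $u_5\,u_6\,\beta \circ Y$ (length $9$, using only $e_2$). These two shortcuts are edge-disjoint, so no single edge deletion destroys both; hence $\pi[u_5,t]$ is not shortest in $G\setminus\{e_1\}$, nor in $G\setminus\{e_2\}$, nor in $G$ minus any one edge --- it is not a $1$-fault replacement path at all. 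The same obstruction persists for every admissible cut vertex $v\in\{u_0,\dots,u_5\}$, so the failure is not an artifact of the greedy choice of $v$ or of which $e\in Q\cap F$ you pick. The theorem itself is untouched: $\pi[s,u_5]$, $\pi[u_5,u_{10}]$, $\pi[u_{10},t]$ are each shortest in $G$. This pinpoints the real issue: a concatenation of two shortest paths is strictly weaker than a $1$-fault replacement path, and your induction needs the latter, so the proof must be organized around a different invariant than ``peel a shortest-in-$G$ prefix and retire one fault.''
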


\begin{figure}[h]
\begin{center}
\includegraphics[scale = 0.7]{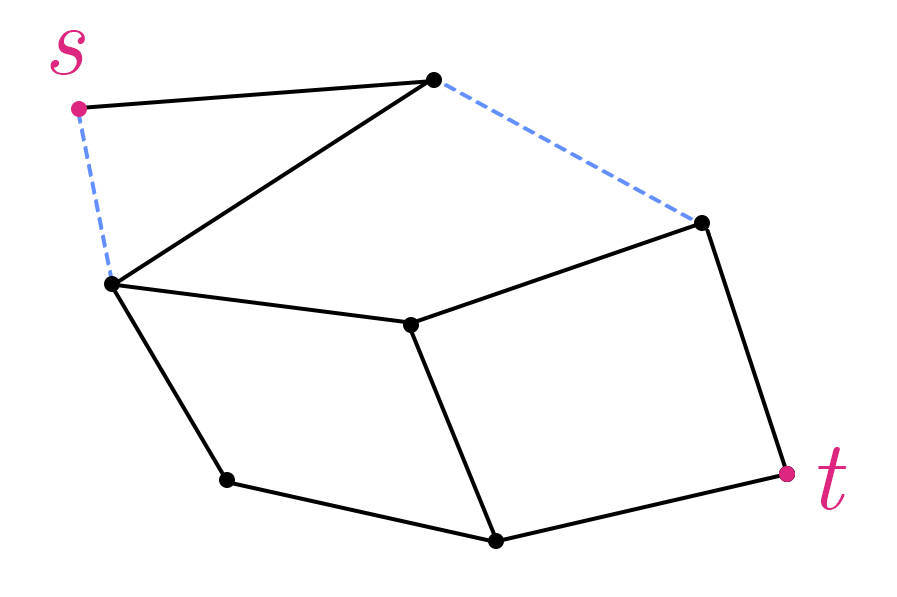}
\includegraphics[scale = 0.7]{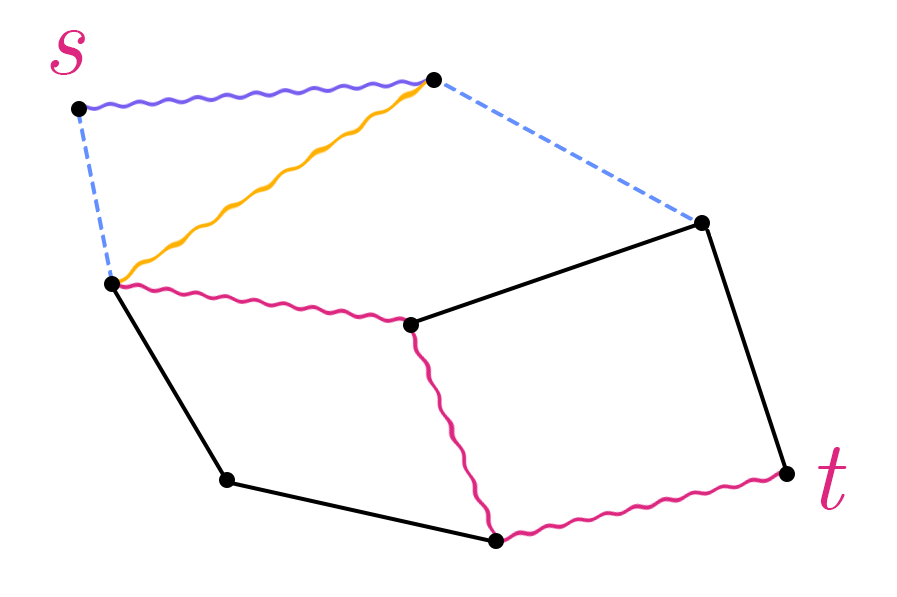}
\end{center}

\caption{\textbf{(Left)} Suppose that two blue dashed edges in the graph fail.  \textbf{(Right)} The restoration lemma guarantees that any shortest $s \leadsto t$ path (indicated with wavy edges) in the post-failure network can be partitioned into three subpaths, which are each a shortest path in the pre-failure network.}
\end{figure}

This restoration lemma suggests a natural approach for restoration algorithms: when $f$ edges fail and an $s \leadsto t$ shortest path is no longer usable, we can find a replacement $s \leadsto t$ shortest path by searching only over $s \leadsto t$ paths that can be formed by concatenating $f+1$ shortest paths that we have already computed in the current routing table.
Up to some subtleties involving shortest path tiebreaking \cite{tiebreaking}, this approach works, and has been experimentally validated as an efficient restoration strategy \cite{restoration}.
It has also found widespread theoretical application, e.g., in pricing algorithms \cite{HS01}, replacement path algorithms \cite{MR22, tiebreaking, BCGLPP18, MMG89, CC19, GJM20}, fault-tolerant variants of spanner problems \cite{tiebreaking, BGPV17, CCFK17}, and more.

\subsection{The Fault Tolerance/Subpath Count Tradeoff and First Main Result}

Most applications of the restoration lemma in the previous literature are actually based on the following generalization of Theorem \ref{thm:baserestoration}, which does not appear explicitly in \cite{restoration} but which follows easily from the main result:

\begin{corollary} [\cite{restoration}] \label{cor:baserestoration}
For any graph $G$ and any $1 \le k \le f$, every $f$-fault replacement path can be partitioned into at most $k+1$ subpaths that are each $(f-k)$-fault replacement paths in $G$.
\end{corollary}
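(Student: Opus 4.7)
The plan is to reduce the corollary to a single application of Theorem \ref{thm:baserestoration}, applied to $\pi$ viewed inside a modified graph where $f-k$ of the failures have been absorbed into the base graph.

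Let $\pi$ be an $f$-fault replacement path with witness $F \subseteq E(G)$ of size $f$, so that $\pi$ is a shortest path in $G \setminus F$. The key observation is that Theorem \ref{thm:baserestoration} makes no special demand on the ``base graph'' --- it applies to any graph in which we have a valid fault witness. So I would pick any subset $F_0 \subseteq F$ of size $f-k$, set $G' := G \setminus F_0$, and use the set-theoretic identity $G' \setminus (F \setminus F_0) = G \setminus F$ to see that $\pi$ is a shortest path in $G' \setminus (F \setminus F_0)$. Since $|F \setminus F_0| = k$, this exhibits $\pi$ as a $k$-fault replacement path in the graph $G'$.

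Now I would apply Theorem \ref{thm:baserestoration} inside $G'$, decomposing $\pi$ into at most $k+1$ subpaths $\sigma_1, \ldots, \sigma_{k+1}$ that are each shortest paths in $G'$. To conclude, I would observe that each such $\sigma_i$ is a shortest path in $G \setminus F_0$, and since $|F_0| = f-k$, the set $F_0$ is itself a witness certifying $\sigma_i$ as an $(f-k)$-fault replacement path in $G$. This gives the required partition.

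I do not anticipate a significant obstacle: the argument is essentially a re-indexing trick that trades a larger fault count in the original graph for a smaller fault count in a modified graph where Theorem \ref{thm:baserestoration} already does all the work. The only thing to double-check is the set identity $(G \setminus F_0) \setminus (F \setminus F_0) = G \setminus F$, which confirms that $\pi$ remains shortest in the ``inner'' graph after the shift.
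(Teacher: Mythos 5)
Your proposal is correct and is essentially the paper's own proof: both absorb an $(f-k)$-edge subset of the fault set into the base graph, apply Theorem \ref{thm:baserestoration} with the remaining $k$ faults, and then note that the deleted subset witnesses each resulting subpath as an $(f-k)$-fault replacement path in $G$. No gaps.
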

\begin{proof} [Proof of Corollary \ref{cor:baserestoration}, given Theorem \ref{thm:baserestoration}]
Let $\pi$ be a replacement shortest path in a graph $G$ avoiding a set of edge failures $F = \{e_1, \dots, e_f\}$.
Consider the graph $G' := G \setminus \{e_1, \dots, e_{f-k}\}$.
In $G'$, $\pi$ is a $k$-fault replacement shortest path avoiding the remaining edge failures $\{e_{f-k+1}, \dots, e_f\}$.
Thus, applying Theorem \ref{thm:baserestoration} in $G'$, we can partition $\pi$ into $k+1$ subpaths that are each a shortest path in $G'$.
Each of these subpaths is an $(f-k)$-replacement shortest path, avoiding $\{e_1, \dots, e_{f-k}\}$, in the original graph $G$.
\end{proof}

An interpretation of this corollary is that there are two parameters of the decomposition that we might want to minimize: the \emph{number of subpaths} in the decomposition, and the \emph{complexity} of each subpath, as measured by the number of edges that would need to be excluded to make each subpath a shortest path.
Corollary \ref{cor:baserestoration} promises an \textbf{additive} tradeoff between subpath count and fault tolerance of each subpath.
The main contribution of this work is to show that the correct tradeoff is actually \textbf{multiplicative}:

\begin{theorem} [Main Result] \label{thm:intromain} The following hold for all positive integers $k, f$ with $k \le f$:
\begin{itemize}
\item \textbf{(Upper Bound)} In any graph $G$, every $f$-fault replacement path can be partitioned into at most $O(k)$ subpaths that are each a replacement path in $G$ avoiding at most $f/k$ faults. 

\item \textbf{(Lower Bound)} There are graphs $G$ and $f$-fault replacement paths $\pi$ that cannot be partitioned into $2k$ subpaths that are each an $(\lfloor f/k \rfloor - 2)$-fault replacement path in $G$.
\end{itemize}
\end{theorem}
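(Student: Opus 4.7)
The plan for the upper bound is a greedy partitioning paired with a Menger/LP-duality argument. Walking along $\pi$ from the start, I grow the current subpath $P$ one vertex at a time, maintaining that $P$ is a shortest path in $G \setminus F_P$ for some witness set $F_P \subseteq F$ with $|F_P| \leq f/k$. Whenever extending $P$ would force $|F_P|$ past this bound, I close $P$ and start a fresh subpath at the next vertex. To bound the number of subpaths, I argue that each break consumes a significant portion of $F$: when extending $P$ past $v$ to $v'$ becomes impossible, every $F' \subseteq F$ of size $\leq f/k$ still leaves a shortcut from the start of $P$ to $v'$ in $G \setminus F'$. Equivalently, the minimum $F$-hitting set for shortcuts of $\pi[u, v']$ has size $> f/k$. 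LP duality (or Menger on a suitable shortcut-hypergraph) then yields $> f/k$ edge-disjoint shortcuts collectively using $> f/k$ distinct edges of $F$. Charging the break with these edges and showing the charges are essentially disjoint across subpaths would give at most $k$ breaks, hence $O(k)$ subpaths.

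I anticipate the main obstacle to be establishing the disjointness of the charges. A natural handle is to show that each edge $e \in F$ has a bounded \emph{region of relevance} along $\pi$ (roughly, the interval of start vertices from which $e$ can lie on a useful shortcut in $G$), and then that two breaks in different regions cannot both charge $e$. Formalizing this likely requires a structural claim of the form ``every charged shortcut lies in the segment covered by the subpath charging it,'' which in turn may force a more refined definition of the witness set $F_P$ than I have sketched---perhaps one that grows the witness set canonically rather than adversarially.

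For the lower bound, I would construct a graph consisting of $k$ gadgets in series, with $\pi$ traversing a long path through each. Each gadget has $\lfloor f/k \rfloor$ parallel shorter bypass paths between its two endpoints, so failing one edge per bypass yields $f$ total failures and makes $\pi$ a shortest replacement path. Any subpath fully spanning a gadget inherits that gadget's $\lfloor f/k \rfloor$ bypass shortcuts and must include them all in its witness set. To obtain the $2k$-subpath bound with the $\lfloor f/k \rfloor - 2$ fault bound, I would further refine each gadget so that even a subpath covering ``most'' of the gadget still inherits nearly all bypasses---for instance by nesting the bypass structure so that two distinct internal breakpoints are needed to reduce a gadget's fault cost below $\lfloor f/k \rfloor - 2$. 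A pigeonhole argument then shows that $2k$ subpaths (with only $2k-1$ breaks) cannot supply two internal breaks for every one of the $k$ gadgets, forcing at least one subpath to exceed the fault budget. The technical challenge on this side is the precise gadget design ensuring partial coverage remains costly, and verifying that no clever subpath-boundary placement circumvents the bound.
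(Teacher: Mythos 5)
Your greedy decomposition is the same starting point as the paper's, but both of the steps you flag as ``to be done'' are where the real content lies, and as sketched they do not go through. First, the duality step is invalid: the fact that every fault subset of size at most $f/k$ leaves some surviving shortcut says the minimum \emph{integral} hitting set of the shortcut family by fault edges exceeds $f/k$, and neither LP duality nor Menger converts this into $f/k$ shortcuts that are pairwise disjoint on fault edges. Shortcuts are \emph{length-bounded} paths, for which cut/packing duality has unbounded integrality gaps (already for general hypergraph covering, a projective-plane-type family has hitting number $\sqrt{n}$ but packing number $1$), so the ``$>f/k$ edge-disjoint shortcuts'' conclusion is unjustified. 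Second, and more fundamentally, the disjointness of charges across breaks---which you correctly identify as the obstacle---is false in the naive form you propose: a single fault $e=(u,v)$ sitting far from $\pi$ can lie on shortcuts of many different subpaths, so it can appear in many witness sets $F_i$, and your structural claim that ``every charged shortcut lies in the segment covered by the subpath charging it'' is simply not true, since a shortcut only needs to connect the subpath's endpoints and be shorter; it can wander anywhere in $G$. The paper's resolution is exactly the machinery you are missing: each pair $(e\in F_i,\pi_i)$ is remapped to the \emph{first} fault (``base fault'') on a shortcut through $e$, an injective choice of base faults is guaranteed by Hall's theorem applied to a bipartite graph between $F_i$ and $F$ (using minimality of $F_i$), and then a length argument shows each base fault serves at most $4$ subpaths---but only after restricting to ``post-light'' (or ``pre-light'') subpaths to handle unequal subpath lengths. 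Without the base-fault remapping and the light-subpath counting, the charging scheme does not close; with them, one gets $\sum_i |F_i| = O(f)$ over the light subpaths and hence $q = O(k)$.

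For the lower bound, your high-level plan (a series of $k$ gadgets, plus a pigeonhole argument that $2k-1$ breakpoints cannot service all gadgets) coincides with the paper's, which chains $k$ copies of a gadget whose path splits into two ``half-arcs'' with disjoint interiors, giving $2k$ regions versus $2k-1$ breakpoints. But the entire difficulty is the gadget, and the version you start from does not work: if a gadget has $\lfloor f/k\rfloor$ parallel bypasses between its two endpoints, then any subpath that omits even one endpoint of the gadget sees no shortcut at all, so a single interior breakpoint per gadget neutralizes it and you only rule out roughly $k+1$ subpaths, not $2k$. You acknowledge that a ``nested'' refinement is needed so that covering half a gadget is already expensive, but that refinement is precisely the paper's construction (chords of geometrically growing span arranged so that any subpath containing a half-arc must fail all but one of the gadget's $f/k$ chords), and its verification is a nontrivial arithmetic argument, not a routine check. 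As written, the proposal defers exactly the two ideas---the base-fault/Hall matching with light-subpath counting on the upper-bound side, and the concrete nested gadget with its half-arc cost bound on the lower-bound side---that constitute the proof.
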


The specific upper bound we show is $8k+1$ subpaths, although in this paper we do not focus on optimizing the leading constant.
We view Theorem \ref{thm:intromain} as a mixed bag, containing both good news and bad news for the area.
The good news is that the restoration lemma tradeoff has been substantially improved, and this potentially opens up new avenues for restoration algorithms for routing table recovery, as explored by Afek et al.~\cite{restoration}.
The bad news is that, in the important special case $k=1$ (i.e., decomposing into only two subpaths), our new lower bound shows that the previous restoration lemma was tight: there are examples in which one cannot decompose an $f$-fault replacement path into two replacement paths avoiding $f-2$ faults each (previously, a tradeoff of roughly $f/2$ was conceivable).
This case $k=1$ is particularly important in applications, especially to spanner and preserver problems \cite{tiebreaking, BGPV17}, and so this lower bound may close a promising avenue for progress on these applications.

\subsection{Weighted Restoration Lemmas and Second Main Result}

The original paper by Afek et al.~\cite{restoration} also proved a \emph{weighted} restoration lemma, which gives a weaker decomposition, but which holds also for weighted input graphs:
\begin{theorem} [Weighted Restoration Lemma \cite{restoration}]
For any \textbf{weighted} graph $G$ and any $1 \le k \le f$, every $f$-fault replacement path $\pi$ can be partitioned into $k+1$ subpaths and $k$ individual edges, where each subpath in the partition is an $(f-k)$-fault replacement path in $G$.
\end{theorem}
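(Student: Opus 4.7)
The plan is to reduce to the core case $k = f$ by the same fault-removal trick as in the proof of Corollary~\ref{cor:baserestoration}: choose any $F_1 \subseteq F$ with $|F_1| = f - k$, and observe that in $G' := G \setminus F_1$ the path $\pi$ is a $k$-fault replacement path avoiding $F \setminus F_1$. A partition of $\pi$ in $G'$ into $k+1$ shortest paths and $k$ edges immediately gives the desired decomposition in $G$, since each shortest path in $G'$ is an $(f-k)$-fault replacement path in $G$. So it suffices to establish the $k = f$ statement: every $f$-fault replacement path in a weighted graph admits a partition into $f+1$ shortest-in-$G$ paths and $f$ separating edges.

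For this core case I would apply a greedy decomposition. Starting at $v_0 := s$, at iteration $i \ge 1$ let $\pi_{i-1}$ be the longest prefix of $\pi[v_{i-1}, t]$ that is a shortest path in $G$, terminating at some vertex $u_i$; if $u_i \ne t$, record the next edge $e^{(i)} := (u_i, v_i)$ of $\pi$ as a separating edge and continue from $v_i$. By construction every subpath in the decomposition is a shortest path in $G$, so the task reduces to showing that the number of separating edges is at most $f$. For each such edge, the maximality of $\pi_{i-1}$ produces a $v_{i-1}$-$v_i$ path $\sigma_i$ in $G$ with $|\sigma_i| < |\pi[v_{i-1}, v_i]|$, and splicing $\sigma_i$ into $\pi$ in place of $\pi[v_{i-1}, v_i]$ produces a shorter $s$-$t$ walk; the optimality of $\pi$ in $G \setminus F$ then forces $\sigma_i$ to use at least one fault of $F$.

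The main obstacle is assigning \emph{distinct} faults to the separating edges, so that at most $|F| = f$ of them can arise. I would address this via an exchange argument: if two shortcuts $\sigma_{i_1}, \sigma_{i_2}$ (with $i_1 < i_2$) share a fault $f^* = (a, b)$, decompose $\sigma_{i_1} = P_1 \cdot f^* \cdot Q_1$ and $\sigma_{i_2} = P_2 \cdot f^* \cdot Q_2$ around $f^*$ and consider the two spliced walks
\[
W_1 := \pi[s, v_{i_1-1}] \cdot P_1 \cdot \mathrm{rev}(P_2) \cdot \pi[v_{i_2-1}, t], \qquad W_2 := \pi[s, v_{i_1}] \cdot \mathrm{rev}(Q_1) \cdot Q_2 \cdot \pi[v_{i_2}, t].
\]
Summing the length bounds $|\sigma_{i_j}| < |\pi[v_{i_j-1}, v_{i_j}]|$ gives $|W_1| + |W_2| < 2|\pi| - 2 w(f^*)$, so at least one of $W_1, W_2$ is strictly shorter than $\pi$. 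In the clean case where each $\sigma_i$ uses only $f^*$ among faults in $F$, this shorter walk lies in $G \setminus F$ and contradicts $\pi$'s shortest-path status. The technical delicacy lies in reducing to this single-fault case when the natural shortcuts use multiple faults; I would handle this by choosing each $\sigma_i$ canonically to minimize its number of $F$-faults, and then invoking a Hall-type combinatorial argument on the bipartite graph of separating edges versus their shortcut faults to extract the desired system of distinct representatives.
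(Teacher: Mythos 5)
First, note that the paper does not prove this statement at all; it is quoted from Afek et al.~\cite{restoration}, so your argument has to stand on its own. Your skeleton is sensible: the reduction to the core case $k=f$ mirrors the proof of Corollary~\ref{cor:baserestoration}, the greedy longest-prefix decomposition is the right object (it is easily seen to use no more pieces than any valid alternating decomposition), and your clean-case exchange is correct -- summing $|\sigma_{i_1}|<|\pi[v_{i_1-1},v_{i_1}]|$ and $|\sigma_{i_2}|<|\pi[v_{i_2-1},v_{i_2}]|$ and using $i_1\le i_2-1$ does give $|W_1|+|W_2|<2|\pi|$, so one of the spliced walks is shorter than $\pi$ (modulo the minor point that the pairing of $P$'s with $Q$'s must be chosen according to the orientation in which the two shortcuts traverse $f^*$).

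The genuine gap is the last step, which you state as intent rather than prove: ``choose $\sigma_i$ canonically and invoke a Hall-type argument'' is precisely where the entire difficulty of the theorem lives, and the pairwise exchange does not extend to it. Once a shortcut carries faults other than the shared one, the pieces $P_1,Q_1,P_2,Q_2$ may themselves contain edges of $F$, so $W_1$ and $W_2$ need not survive in $G\setminus F$ and their being shorter than $\pi$ contradicts nothing; already for $|F|=2$ and three separating edges your tools do not close the argument. To apply Hall's theorem to your bipartite graph (separating edges versus faults on their shortcuts) you must verify Hall's condition for \emph{every} subset of separating edges, i.e.\ rule out $m$ disjoint augmented subpaths all of whose shortcuts are confined to fewer than $m$ faults -- a statement essentially as strong as the theorem itself, and nothing in your sketch supplies it. Be aware also that the Hall argument in this paper (Lemma~\ref{lem:hallreduce2}) is a different animal: it matches the faults of a \emph{single} subpath's minimum fault set to base faults, using minimality of that fault set, and even combined with the pre-/post-light counting it only yields each fault being reused by $O(1)$ subpaths (Lemma~\ref{fsc}), hence $O(k)$ pieces -- not the exact $k+1$ subpaths and $k$ edges claimed here. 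So pointing to ``a Hall-type argument'' in the style of this paper would at best reprove a weakened, constant-factor version; the exact statement needs the distinctness argument of Afek et al., which is missing from your proposal.
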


More specifically, this theorem promises that the subpaths and individual edges occur in an alternating pattern (although some of these subpaths in this pattern may be empty).
One can again ask whether this additive tradeoff between subpath count and fault tolerance per subpath is optimal.
We show that it is not, and that it can be improved to a multiplicative tradeoff, similar to Theorem \ref{thm:intromain}.

\begin{theorem} [Main Result, Weighted Setting] \label{thm:intromainwtd}
For any \textbf{weighted} graph $G$ and any $1 \le k \le f$, every $f$-fault replacement path $\pi$ can be partitioned into $O(k)$ subpaths and $O(k)$ individual edges, where each subpath in the partition is an $(f/k)$-fault replacement path in $G$.
\end{theorem}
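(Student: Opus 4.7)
The plan is to adapt the proof strategy of the unweighted main result (Theorem~\ref{thm:intromain}) to the weighted setting by allowing a single ``bridge'' edge at each cut point. I would presume the unweighted proof proceeds by a greedy/charging argument: walk along $\pi$ from one endpoint, maintaining a current subpath $P_i$ as long as it remains an $(f/k)$-fault replacement path, and declare a cut whenever the current extension would force the fault-tolerance to exceed $f/k$. A charging argument then shows that each cut can be blamed on a distinct block of $\Omega(f/k)$ faults in $F$, so there are at most $O(k)$ cuts and hence $O(k)$ subpaths.

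The new phenomenon in the weighted setting is precisely the one that motivated the original weighted restoration lemma of Afek et al.: when two replacement subpaths of $\pi$ are concatenated, their combined weighted length may exceed the actual weighted distance along $\pi$, because an intermediate edge of $\pi$ may ``short-circuit'' across the cut point. I would therefore modify the greedy construction so that at each cut, instead of directly juxtaposing $P_i$ and $P_{i+1}$, we absorb the next edge of $\pi$ into the partition as an individual bridge edge and begin constructing $P_{i+1}$ from its far endpoint. This produces the alternating subpath/edge pattern promised by the theorem, and shifts the analysis of $P_{i+1}$ to a sub-instance of $\pi$ whose endpoints behave ``cleanly'' in the weighted sense.

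The main obstacle I anticipate is verifying that the charging argument transfers cleanly from the unweighted to the weighted regime. I expect that each cut (together with its bridge edge) can still be charged to a distinct block of $\Omega(f/k)$ faults in $F$ whose presence forces the cut, immediately yielding $O(k)$ cuts, $O(k)$ subpaths, and $O(k)$ bridge edges. Delicate bookkeeping may be required around tiebreaking between replacement paths of equal weight, a subtlety emphasized in~\cite{tiebreaking}; I would handle this either by fixing a consistent lexicographic convention on paths or by a small symbolic perturbation of edge weights, and then check that the charging scheme remains well-defined. If these technicalities are navigated successfully, the weighted bound follows by essentially the same counting as in the unweighted proof, with the bridge edges absorbing all phenomena introduced by non-unit edge weights.
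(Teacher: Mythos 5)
Your decomposition matches the paper's: grow each subpath greedily as long as it remains an $(f/k)$-fault replacement path, then peel off the next edge of $\pi$ as an individual ``bridge'' edge and continue, yielding the alternating subpath/edge pattern. The gap is in the counting step. You presume the unweighted argument charges each cut to a \emph{distinct block} of $\Omega(f/k)$ faults whose presence forces that cut, and that this transfers to the weighted case. That is not how the argument can go, and no easy version of it works: the minimal fault sets $F_i$ witnessing that the augmented subpaths $\pi_i' = \pi_i \circ e_i$ fail to be $(f/k)$-fault replacement paths may overlap heavily, since a single fault (or a single shortcut structure) can be responsible for many different subpaths failing to extend. Establishing that this overlap is bounded is the technical heart of the paper: one maps each fault $e \in F_i$ to a ``base fault'' (the first fault on some shortcut of $\pi_i'$ containing $e$), uses Hall's theorem together with minimality of $F_i$ (Lemma~\ref{lem:hallreduce2}) to make this map injective, splits subpaths into pre-light/post-light by (weighted) length, and then shows each fault participates in at most $4$ of the resulting FS-pairs (Lemma~\ref{fsc} and its weighted analogue). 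Your proposal contains no substitute for this machinery, so the claim ``$O(k)$ cuts'' is asserted rather than proved; in particular, even taking the unweighted theorem as a black box does not yield the weighted statement, because the FS-pair counting must be redone with weighted lengths (e.g., $w(p_a) < w(\pi_a') - w(e)$ in place of $|p_a| \le |\pi_a'| - 2$).

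Two smaller points. First, the role of the bridge edges is not really to fix ``short-circuiting at concatenation points''; they are needed because in a weighted graph a single heavy edge of $\pi$ may itself fail to be an $(f/k)$-fault replacement path between its endpoints, so the greedy subpath can stall at length zero, and the exceptional edge is what guarantees progress (the paper makes this point when explaining why any weighted restoration lemma must allow exceptional edges). Second, the tiebreaking concern is a red herring for this existential statement: the paper's argument compares strict weighted lengths of shortcuts against subpaths and never needs a canonical choice among equal-weight shortest paths.
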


For many graphs of interest, this theorem can be simplified.
For example, suppose we consider the setting of graphs that represent \emph{metrics}, in which we require that every edge is the shortest path between its endpoints.
Then we can consider the $O(k)$ individual edges in the decomposition to each be a $0$-fault replacement path, and so we could correctly state that $\pi$ can be partitioned into $O(k)$ subpaths that are each at most $(f/k)$-fault replacement paths.
However, we also note that our weighted main result cannot be simplified \emph{in general}: one can easily construct weighted graphs containing edges $(u, v)$ that are $f$-fault but not $(f-1)$-fault replacement paths between their endpoints, and therefore any weighted restoration lemma will need to include some exceptional edges in its hypotheses, as in \cite{restoration} and Theorem \ref{thm:intromainwtd}.

\subsection{Algorithmic Considerations}

The proofs of our new restoration lemmas (both weighted and unweighted) are given using a simple greedy decomposition strategy to find subpaths; essentially, we repeatedly peel off the longest possible prefix from the input path $\pi$ that is an $f/k$-fault replacement path, and then argue that this process will repeat at most $O(k)$ times.
Although this leads to a simpler proof, a downside of this greedy procedure is that it requires exponential time in the number of faults $f$.
That is, given a subpath $\pi_i \subseteq \pi$, it is not clear how to test whether $\pi_i$ is an $f/k$-fault replacement path, besides via brute force search over every subset of faults $F' \subseteq F, |F'| \le f/k$ which takes $\text{poly}(n) \cdot \exp(f)$ time.
To obtain algorithmic results, we thus need to change the decomposition strategy in Section \ref{sec:algorithms} entirely.
We show a more involved algorithm that implements our restoration lemmas in $\text{poly}(n, f)$ time.
That is:
\begin{theorem} [Unweighted Algorithmic Restoration Lemma] \label{thm:intromainalg}
There is an algorithm that take on input a graph $G$, a set $F$ of $|F|=f$ edge faults, a shortest path $\pi$ in $G \setminus F$, and a parameter $k$, and which returns:
\begin{itemize}
\item A partition
$\pi = \pi_0 \circ \pi_1 \circ \dots \circ \pi_q$
into $q=O(k)$ subpaths, and
\item Fault sets $F_0, \dots, F_q \subseteq F$ with each $|F_i| \le f/k$, such that each path $\pi_i$ in the decomposition is a shortest path in $G \setminus F_i$
\end{itemize}
(hence the algorithm implements Theorem \ref{thm:intromain}).
This algorithm runs in polynomial time in both the number of nodes $n$ and the number of faults $f$.
\end{theorem}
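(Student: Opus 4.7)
The plan is to adapt the greedy decomposition underlying Theorem~\ref{thm:intromain} into a polynomial-time algorithm. Recall that the existence-proof greedy repeatedly peels off the longest prefix $\sigma$ of the remaining path that is an $(f/k)$-fault replacement path. The exponential cost comes from the certification subroutine---does there exist $F' \subseteq F$ with $|F'| \le f/k$ making $\sigma$ a shortest path in $G \setminus F'$?---which is a shortest-path interdiction problem and is NP-hard in general. The key idea is to replace this minimum-$F'$ problem with a polynomial-time-computable proxy and then verify that the resulting decomposition still has $O(k)$ parts.

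The proxy I propose is the canonical fault set
\[
F_\sigma \;:=\; \{\, e = (x,y) \in F : \min\bigl(d_G(u,x)+1+d_G(y,v),\; d_G(u,y)+1+d_G(x,v)\bigr) < |\sigma|\,\},
\]
i.e.\ the faults lying on some $u \to v$ path in $G$ strictly shorter than $\sigma = \pi[u,v]$. After an all-pairs shortest-path precomputation in $G$, each $F_\sigma$ is computable in $O(f)$ time. Because $G \setminus F_\sigma$ contains no $u \to v$ shortcut of $\sigma$, the subpath $\sigma$ is automatically a shortest $u \to v$ path in $G \setminus F_\sigma$. The algorithm then is: starting from $u = s$, scan $v$ along $\pi$ to find the largest index with $|F_{\pi[u,v]}| \le f/k$, output $(\pi[u,v], F_{\pi[u,v]})$, reset $u \leftarrow v$, and iterate. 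Every operation is polynomial in $n$ and $f$, giving the runtime guarantee.

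The main obstacle---and the step most likely to require substantial new work---is proving that this canonical greedy halts after only $O(k)$ iterations. The issue is that $F_\sigma$ can be strictly larger than the minimum certifying fault set (e.g.\ if several shortcuts individually use distinct faults that are collectively hit by one shared fault), so the algorithmic greedy may break prefixes more aggressively than the existence greedy. I will need to revisit the charging argument behind Theorem~\ref{thm:intromain}: when the sweep halts at $v$, the failed extension to $v{+}1$ must witness a collection of $u \to v{+}1$ shortcut paths in $G$, and I hope to show that these witnesses are ``distributed'' across the faults of $F$ in a way that bounds the iteration count by $O(k)$ via a union-bound-style accounting. A plausible fallback, if the canonical proxy turns out to be too lossy, is to augment the proxy with a local pruning step that, for each $e \in F_\sigma$, tests whether $\sigma$ remains a shortest path in $G \setminus (F_\sigma \setminus \{e\})$ and removes $e$ if so, thereby shrinking $F_\sigma$ toward a minimal (not necessarily minimum) certifying fault set without invoking the full NP-hard optimization.
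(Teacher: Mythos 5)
There is a genuine gap, and it is exactly the step you flag yourself: you never prove that your greedy with the canonical fault set $F_\sigma$ halts after $O(k)$ iterations, and in fact it does not in general. The set of \emph{all} faults lying on some $u \leadsto v$ path shorter than $\sigma$ can be far larger than any fault set the counting argument of Theorem~\ref{thm:intromain} can charge for: if many shortcuts fan out through many distinct faults but all pass through one common ``bottleneck'' fault, then a single fault certifies $\sigma$, yet $|F_\sigma|$ can exceed $f/k$ almost immediately, forcing your sweep to cut long before the existential greedy would. Repeating such a gadget along $\pi$ can force $\omega(k)$ cuts, so the $O(k)$ bound genuinely fails for the canonical proxy rather than merely being hard to prove. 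Your fallback does not repair this: pruning $F_\sigma$ one fault at a time only yields an \emph{inclusion-minimal} valid set, and inclusion-minimality is not the property the charging argument needs. The FS-pair counting requires a system of \emph{distinct} base faults, i.e., a matching in the bipartite base-fault graph saturating $F_i$ (Hall's condition); the paper's Lemma~\ref{lem:hallreduce2} shows that a Hall violation lets you replace $A \subseteq F_i$ by its base-fault neighborhood $N(A)$, which \emph{decreases cardinality but may introduce faults outside $F_i$}. Hence minimum-cardinality sets satisfy Hall's condition, but inclusion-minimal ones need not, so your locally pruned set may still be unable to generate enough FS-pairs and the $O(k)$ accounting breaks.

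The paper's route is different in precisely this respect: its \textsc{FaultReduce} subroutine starts from $F_i = F$, builds the left and right base-fault graphs $\Gamma_L, \Gamma_R$ (membership testable in polynomial time via a constant number of shortest-path computations per pair), computes maximum matchings, and whenever a matching fails to saturate $F_i$ it performs the Hall-violation swap $F_i \gets (F_i \setminus A) \cup N(A)$, which preserves validity and strictly shrinks $|F_i|$, so it terminates in at most $f$ rounds. The returned set is neither minimum nor necessarily minimal, but it comes with a certificate (the saturating matchings) that $|F_i|$ left \emph{and} right FS-pairs can be generated, which is the only property the $O(k)$ bound via Corollary~\ref{cor:fsbound} actually uses. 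The main algorithm then binary-searches for the furthest boundary whose \textsc{FaultReduce} output has size at most $f/k$. If you want to salvage your write-up, replace the canonical proxy and the one-at-a-time pruning with this Hall-condition-driven reduction (both-sided, since one does not know in advance whether pre-light or post-light subpaths dominate); without it, the central claim of the theorem is unproven.
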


The core of of our new decomposition approach is a reduction to the algorithmic version of Hall's theorem; this is somewhat involved, and so we overview it in more depth in the next part of this introduction.
Using roughly the same algorithm, we also show the algorithmic restoration lemma in the weighted setting.

\begin{theorem} [Weighted Algorithmic Restoration Lemma] \label{thm:intromainalgwtd}
There is an algorithm that take on input a weighted graph $G$, a set $F$ of $|F|=f$ edge faults, a shortest path $\pi$ in $G \setminus F$, and a parameter $k$, and which returns:
\begin{itemize}
\item A partition
$\pi = \pi_0 \circ e_0 \circ  \dots \circ \pi_{q-1} \circ e_{q-1} \circ \pi_q,$
where each $\pi_i$ is a (possibly empty) subpath, each $e_i$ is a single edge, and $q=O(k)$, and
\item Fault sets $F_0, \dots, F_q \subseteq F$ with each $|F_i| \le f/k$, such that each path $\pi_i$ in the decomposition is a shortest path in $G \setminus F_i$
\end{itemize}
(hence the algorithm implements Theorem \ref{thm:intromainwtd}).
This algorithm runs in polynomial time in both the number of nodes $n$ and the number of faults $f$.
\end{theorem}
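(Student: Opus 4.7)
The plan is to adapt the unweighted algorithm of Theorem~\ref{thm:intromainalg} to the weighted setting, treating the $O(k)$ exceptional edges $e_i$ as ``safety valves'' that allow us to skip past obstructions we cannot otherwise resolve. At a high level, I would scan along $\pi$ from left to right, maintaining a current subpath $\pi_i$ together with a candidate witness fault set $F_i \subseteq F$ of size at most $f/k$ certifying $\pi_i$ as a shortest path in $G \setminus F_i$. As long as extending $\pi_i$ by the next edge of $\pi$ still admits such a certificate, I extend; the moment it does not, I close off $\pi_i$, attach the offending edge as $e_i$, and start a fresh $\pi_{i+1}$.

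The crucial subroutine is the algorithmic test: given a subpath $\pi'$, decide whether a witness $F' \subseteq F$ with $|F'| \le f/k$ exists, and if so produce one. As the introduction previews, I would reduce this to algorithmic Hall's theorem. On one side I place ``obstructions'' --- pairs $(u,v)$ of vertices on $\pi'$ carrying an alternative $u \leadsto v$ path of weight at most the corresponding portion of $\pi'$ --- and on the other side the candidate faults $F$; an obstruction is eligible to be killed by $e \in F$ iff the alternative path passes through $e$. A valid $F'$ then corresponds to a system of distinct representatives (or equivalently a small vertex cover on the fault side), computable by bipartite matching / max-flow. Although there can be exponentially many alternative paths, only polynomially many ``essential'' obstructions need to be enumerated: for each pair $(u,v)$ on $\pi'$, one shortest-path computation in $G$ with appropriate reduced weights exposes the relevant alternatives.

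To bound the number of subpaths by $q = O(k)$, I would re-use the charging argument from the existential proof of Theorem~\ref{thm:intromainwtd}: each time an exceptional edge $e_i$ is popped, the failing Hall condition exposes a dense obstruction family that forces $\Omega(f/k)$ distinct faults of $F$ to be ``spent'' irreversibly against it. Since $|F|=f$, this bounds the total number of pops, and hence the number of subpaths, by $O(k)$.

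The main obstacle I anticipate is the interaction between edge weights and the Hall's theorem reduction. In the unweighted case, a single-edge extension changes path lengths by exactly one, so a failed certification for $\pi' \circ e$ cleanly identifies $e$ as the edge to excise. In the weighted case an edge of arbitrary weight can cause a large, discontinuous change in the set of viable alternatives, and the Hall witness of infeasibility must be translated into a certificate that one specific edge should be skipped as $e_i$ rather than absorbed into either $\pi_i$ or $\pi_{i+1}$. Making this translation precise, while preserving both the $O(k)$ subpath bound and a running time polynomial in $n$ and $f$ (and \emph{independent} of edge weights), is where I expect the technical work to concentrate.
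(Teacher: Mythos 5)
There is a genuine gap, and it sits exactly at the subroutine your whole plan rests on. You propose to ``decide whether a witness $F' \subseteq F$ with $|F'| \le f/k$ exists, and if so produce one'' by a reduction to Hall's theorem / bipartite matching. But a valid fault set must intersect \emph{every} shortcut of the subpath; each shortcut may contain many faults, and there can be exponentially many shortcuts, so this is a hitting-set (covering) problem, not a system-of-distinct-representatives problem --- a matching or max-flow computation does not decide it, and your claim that only polynomially many ``essential'' obstructions need to be enumerated via per-pair shortest-path computations is unsupported; that enumeration is precisely the difficulty. Indeed the paper explicitly flags that it does not know how to test in $\mathrm{poly}(n,f)$ time whether a subpath is an $(f/k)$-fault replacement path, and its algorithm deliberately avoids ever answering that question. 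Instead, the subroutine \textsc{FaultReduce} starts from $F_i = F$ and runs Hall's theorem on a different bipartite graph --- current faults $F_i$ on one side, base faults (first faults along shortcuts) in $F$ on the other. A saturating matching certifies that $|F_i|$ FS-pairs can be generated, which is all the counting argument needs, even though the returned $F_i$ need not be minimum; a deficient set $A$ yields the swap $F_i \gets (F_i \setminus A) \cup N(A)$, which strictly shrinks $F_i$ while keeping it a valid fault set (Lemma \ref{lem:hallreduce2}). The greedy stopping rule is ``\textsc{FaultReduce} returns a set of size $\le f/k$,'' not ``some valid set of size $\le f/k$ exists,'' and in the weighted case the only changes are using weighted distances to build $\Gamma_L, \Gamma_R$ and appending the next edge of $\pi$ as the exceptional edge $e_i$ after each maximal subpath --- so the ``translation of the Hall infeasibility witness into which edge to skip'' that worries you never arises.

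Two smaller issues. First, your $O(k)$ bound as phrased (``$\Omega(f/k)$ distinct faults are spent irreversibly, and $|F| = f$'') is not how the count works: a single fault can be relevant to many subpaths, and the actual argument is that each fault appears in at most $4$ FS-pairs (Lemma \ref{fsc} and its weighted analogue, via the pre-light/post-light case split) while each failing augmented subpath generates more than $f/k$ distinct FS-pairs, giving $q = O(k)$; this piece could be repaired by invoking the counting of Theorem \ref{thm:intromainwtd} and Corollary \ref{cor:fsbound} properly, but only once the certification subroutine is replaced by something like \textsc{FaultReduce}. Second, even if your decision subroutine existed, you would still need the bridge the paper builds between algorithmically computable fault sets and the minimality assumption used in the existential proof's FS-pair generation; the paper resolves this by certifying FS-pair generation directly from the (possibly non-minimum) returned set rather than from minimality.
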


Although our algorithmic restoration lemmas run in polynomial time, they do not run in particularly efficient polynomial time, and optimizing this runtime is an interesting direction for future work.

\subsection{Technical Overview of Upper Bounds}

The more involved parts of the paper are the upper bounds in Theorem \ref{thm:intromain} and \ref{thm:intromainwtd}.
We will overview the proof in the unweighted setting (Theorem \ref{thm:intromain}) here; the weighted setting carries only a few additional technical details.

Let $\pi$ be an $f$-fault replacement path with endpoints $(s, t)$ in an input graph $G$.
In particular, let $F$ be a set of $|F| \le f$ edge faults, and suppose that $\pi$ is a shortest $s \leadsto t$ path in the graph $G \setminus F$.
We are also given a parameter $f' < f$, and our goal is to partition $\pi$ into as few subpaths as possible, subject to the constraint that each subpath is a replacement path avoiding at most $f'$ faults.

\paragraph{The Partition of $\pi$.}
We use a simple greedy process to determine the partition of $\pi$.
We will determine a sequence of nodes $(s=x_0, x_1, \dots, x_{k}, x_{k+1}=t)$ along $\pi$, which form the boundaries between subpaths in the decomposition.
Start with $s =: x_0$, and given node $x_i$, define $x_{i+1}$ to be the furthest node following $x_i$ such that the subpath $\pi[x_i, x_{i+1}]$ is an $f'$-fault replacement path.
We will denote the subpath $\pi[x_i, x_{i+1}]$ as $\pi_i$, and so the decomposition is
$$\pi = \pi_0 \circ \dots \circ \pi_k.$$
For each subpath we let $F_i \subseteq F, |F_i| \le f'$ be an edge set of minimum size such that $\pi_i$ is a shortest $x_i \leadsto x_{i+1}$ path in the graph $G \setminus F_i$.
(There may be several choices for $F_i$, in which case we fix one arbitrarily.)
Our goal is now to show that the parameter $k$, defined as (one fewer than) the number of subpaths that arise from the greedy decomposition, satisfies $kf' \le O(f)$. 

\paragraph{Proof Under Simplifying Assumptions.}

Our proof strategy will be to prove that an arbitrary faulty edge $e \in F$ can appear in only a constant number of subpath fault sets $F_i$, which implies that $kf' \le O(f)$ by straightforward counting.
To build intuition, let us see how the proof works under two rather strong simplifying assumptions:
\begin{itemize}
\item \textbf{(Equal Subpath Assumption)} We will assume that all subpaths in the decomposition have equal length: $|\pi_0| = \dots = |\pi_k|$.

\item \textbf{(First Fault Assumption)} Let us say that a \emph{shortcut} for a subpath $\pi_i$ is an alternate $x_i \leadsto x_{i+1}$ path in the original graph $G$ that is strictly shorter than $\pi_i$.
Every shortcut must contain at least one fault in $F_i$, and conversely, every fault in $F_i$ lies on at least one shortcut (or else it may be dropped from $F_i$).
Our second simplifying assumption is that for each $e \in F_i$, there exists a shortcut $\sigma$ for $\pi_i$ such that $e$ is the \emph{first} fault in $F_i$ on $\sigma$.
\end{itemize}

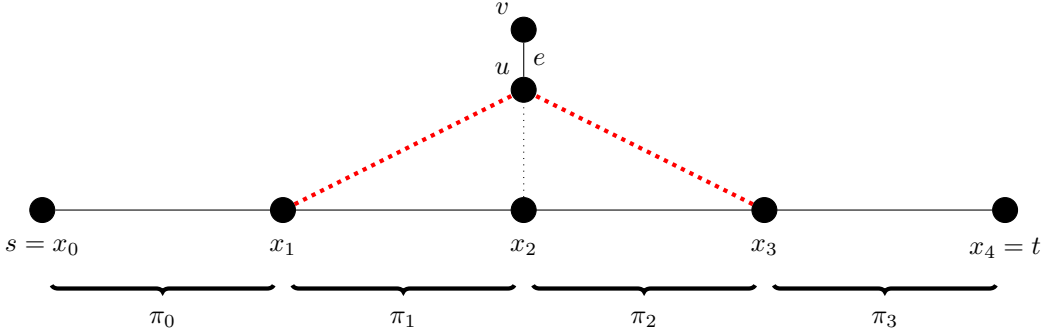
\begin{figure}[h]
\begin{center}
\begin{tikzpicture} [scale=0.8]
    \foreach \i in {1,...,5} {
        \node[draw, circle, fill] at ({4*\i},0) (node\i) {};
    }
    
    \draw (node1) -- (node5);
    
    \node[draw, circle, fill] at (12,2) (edgeStart) {};
    \node[draw, circle, fill] at (12,3) (edgeEnd) {};
    \node [above left of = edgeStart, node distance=0.4cm] {$u$};
    \node [above left of = edgeEnd, node distance=0.4cm] {$v$};
    
    \draw (edgeStart) -- node [midway, right] {$e$} (edgeEnd);
    
    \draw[dotted, red, ultra thick] (node2) -- (edgeStart);
    \draw[dotted] (node3) -- (edgeStart);
    \draw[dotted, red, ultra thick] (node4) -- (edgeStart);

    \node [below of = node1, node distance=0.5cm] {$s=x_0$};
    \node [below of = node2, node distance=0.5cm] {$x_1$};
    \node [below of = node3, node distance=0.5cm] {$x_2$};
    \node [below of = node4, node distance=0.5cm] {$x_3$};
    \node [below of = node5, node distance=0.5cm] {$x_4=t$};

    \node [below of = node1, node distance=1cm] (x1b) {};
    \node [below of = node2, node distance=1cm] (x2b) {};
    \node [below of = node3, node distance=1cm] (x3b) {};
    \node [below of = node4, node distance=1cm] (x4b) {};
    \node [below of = node5, node distance=1cm] (x5b) {};

    \draw [ultra thick, decoration={brace, mirror}, decorate] (x1b) -- node [midway, below=0.2cm] {$\pi_0$} (x2b);
    \draw [ultra thick, decoration={brace, mirror}, decorate] (x2b) -- node [midway, below=0.2cm] {$\pi_1$} (x3b);
    \draw [ultra thick, decoration={brace, mirror}, decorate] (x3b) -- node [midway, below=0.2cm] {$\pi_2$} (x4b);
    \draw [ultra thick, decoration={brace, mirror}, decorate] (x4b) -- node [midway, below=0.2cm] {$\pi_3$} (x5b);

    
\end{tikzpicture}
\end{center}
\caption{\label{fig:faultsetup} Under the equal subpath and first fault assumptions, we can reach contradiction if we assume that there are three different subpaths that all have shortcuts that use $e$ as their first fault.}
\end{figure}

With these two assumptions in hand, we are ready to prove that each faulty edge $e$ appears in only $O(1)$ many fault sets $F_i$.
Suppose for contradiction that there are three separate subpaths that all have shortcuts that use $e$ as their first edge, and moreover that these shortcuts use $e$ with the same orientation.
Consider the first and last of these shortcut prefixes, which we will denote as $q(x_1, u)$ and $q(x_3, u)$.
In Figure \ref{fig:faultsetup}, the shortcuts are represented as dotted paths, and $q(x_1, u), q(x_3, u)$ are colored red.
Notice that $q(x_1, u) \cup q(x_3, u)$ forms an alternate $x_1 \leadsto x_3$ path.
Since $e$ is assumed to be the first fault on these shortcuts, this alternate $x_1 \leadsto x_3$ path avoids all faults in $F$.
Additionally, by definition of shortcuts we have
$$|q(x_1, u)| + |q(x_3, u)| < \left|\pi_1 \right| + \left| \pi_3 \right|.$$
Since we have assumed that all subpaths have the same length, we can amend this to
$$|q(x_1, u)| + |q(x_3, u)| < \left|\pi_1 \right| + \left| \pi_2 \right|.$$
But this implies that $q(x_1, u) \cup q(x_3, u)$ forms an $x_1 \leadsto x_3$ path that is strictly shorter than the one used by $\pi$, which contradicts that $\pi$ is a shortest path in $G \setminus F$.
This completes the simplified proof, but the challenge is now to relax our two simplifying assumptions, which are currently doing a lot of work in the argument.

\paragraph{Relaxing the Equal-Subpath-Length Assumption.}

The equal-subpath-length assumption is the easier of the two to relax.
It is only used in one place in the previous proof: to replace $|\pi_3|$ with $|\pi_2|$ on the right-hand side of the inequality.
From this we see that $|\pi_2| \ge |\pi_3|$ is a good case for the argument, since this substitution remains valid.
The bad case is when $|\pi_2| < |\pi_3|$.

To handle this bad case, we follow a proof strategy from \cite{spanners}.
Let us say that a subpath is \emph{pre-light} if it is no longer than the preceding subpath, or \emph{post-light} if it is no longer than the following subpath.
In the above example, $\pi_2$ is pre-light if we have $|\pi_2| \le |\pi_1|$, and it is post-light if $|\pi_2| \le |\pi_3|$.
It is possible for a particular subpath to be both pre- and post-light, or for a particular subpath to be neither.
A simple counting argument shows that either a constant fraction (nearly half) of the subpaths are pre-light, or a constant fraction are post-light.
We will specifically assume in the following discussion that a constant fraction of the subpaths are post-light; the other case is symmetric.

We can now restrict the previous counting argument to the post-light subpaths only.
That is, we can argue that for each fault $e = (u, v)$ considered with orientation, there are only constantly many post-light subpaths for which it appears as the first fault of a shortcut.
The same counting argument then implies an upper bound of $|F_i| \le O(f/k)$ for the fault sets $F_i$ associated to post-light subpaths $\pi_i$, which completes the proof.

This still uses the first-fault assumption, and we next explain how this can be relaxed.
We consider the machinery used to relax the first-fault assumption to be the main technical contribution of this paper.

\paragraph{Relaxing the First-Fault Assumption.}

Let us now consider the case where there is a fault $e \in F_i$ that is \emph{not} the first fault of any shortcut for $\pi_i$.
We can still assume that there exists at least one shortcut $\sigma$ for $\pi_i$ with $e \in \sigma$ (otherwise, we can safely drop $e$ from $F_i$).
Let $e^*$ be the first fault along that shortcut $\sigma$.
We will shift the focus of our counting argument.
Previously, we considered each $(e \in F_i, \pi_i)$ as a pair, and our goal was to argue that faults $e$ can only be paired with a constant number of subpaths $\pi_i$.
Now, our strategy is to map the pair $(e \in F_i, \pi_i)$ to the different pair $(e^*, \pi_i)$, and our goal is to argue that each fault $e^*$ can only be paired with a constant number of subpaths $\pi_i$.
We call these new pairs $(e^*, \pi_i)$ \emph{Fault-Subpath (FS) Pairs}, and we formally describe their generation in Section \ref{sec:fspairs}.
(We note that, for a technical reason, we actually generate FS pairs using \emph{augmented subpaths} that attach one additional node to $\pi_i$ - but to communicate intuition about our proof, we will ignore this detail for now.)

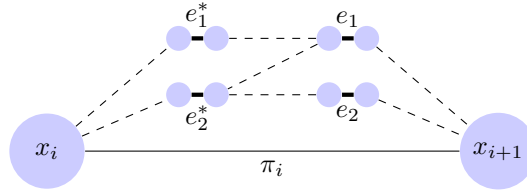
\begin{figure}
\begin{center}
\begin{tikzpicture}
    \node [style={circle, fill=blue!20, minimum size=1cm}] at (0,0) (x_i) {$x_i$};
    \node  [style={circle, fill=blue!20, minimum size=1cm}] at (6,0) (x_ip1) {$x_{i+1}$};

    \draw (x_i) --node [midway, below] {$\pi_i$} (x_ip1);


    \node [style={circle, fill=blue!20}] (A) at (3.75, 1.5) {};
    \node [style={circle, fill=blue!20}] (B) at (4.25, 1.5) {};
    \node [style={circle, fill=blue!20}] (A2) at (3.75, 0.75) {};
    \node [style={circle, fill=blue!20}] (B2) at (4.25, 0.75) {};
    \node [style={circle, fill=blue!20}] (A3) at (1.75, 1.5) {};
    \node [style={circle, fill=blue!20}] (B3) at (2.25, 1.5) {};
    \node [style={circle, fill=blue!20}] (A4) at (1.75, 0.75) {};
    \node [style={circle, fill=blue!20}] (B4) at (2.25, 0.75) {};
    
    \draw[ultra thick] (A) to node [midway, above] {$e_1$} (B);
    \draw[ultra thick] (A2) to node [midway, below] {$e_2$} (B2);
    \draw[ultra thick] (A3) to node [midway, above] {$e^*_1$} (B3);
    \draw[ultra thick] (A4) to node [midway, below] {$e^*_2$} (B4);

    \draw [dashed] (B) -- (x_ip1);
    \draw [dashed] (B2) -- (x_ip1);
    \draw [dashed] (x_i) -- (A3);
    \draw [dashed] (x_i) -- (A4);

    \draw [dashed] (A) -- (B3);
    \draw [dashed] (A2) -- (B4);
    \draw [dashed] (A) -- (B4);

\end{tikzpicture}
\end{center}
\caption{In order to relax the first fault assumption, instead of counting $(e_1, \pi_i)$ and $(e_2, \pi_i)$ as pairs, we can map these to distinct FS pairs $(e^*_1, \pi_i), (e^*_2, \pi_i)$.  Our main technical step is to show that this distinct mapping is always possible.}
\end{figure}

Although we can bound the number of FS pairs $(e^*, \pi_i)$ as before, this only implies our desired bound on the size of the fault sets $|F_i|$ if we can \emph{injectively} map each pair $(e \in F_i, \pi_i)$ to a \emph{distinct} FS pair $(e^*, \pi_i)$.
The main technical step in this part of the proof is to show that this injective mapping is always possible.
Let $\Gamma_i$ be a bipartite graph between vertex sets $F_i$ and $F$.
Put an edge between nodes $e \in F_i, e^* \in F$ if and only if there exists a shortcut $\sigma$ for $\pi_i$, in which $e \in \sigma$ and $e^*$ is the first fault in $\sigma$.
An injective mapping to FS pairs corresponds to a matching in $\Gamma_i$ of size $|F_i|$, i.e., a matching of maximum possible size (given that one of the sides of the bipartition has only $|F_i|$ nodes).
The purpose of this graph construction is to enable the following application of Hall's theorem:

\begin{lemma} [Hall's Theorem]
The following are equivalent:
\begin{itemize}
\item The graph $\Gamma_i$ has a matching of size $|F_i|$.
(Equivalently, one can associate each pair $(e \in F_i, \pi_i)$ to a \textbf{unique} FS pair.)

\item There does not exist a subset of faults $F'_i \subseteq F_i$ whose neighborhood in $\Gamma_i$ is strictly smaller than $F'_i$ itself (that is, $|N(F'_i)| < |F'_i|$). 
\end{itemize}
\end{lemma}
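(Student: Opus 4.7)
The statement is the classical Hall's marriage theorem applied to a bipartite graph where one side is $F_i$. The plan is to prove both directions of the equivalence, with the forward (easy) direction being a direct consequence of the definition of a matching, and the reverse direction proceeding by induction on $|F_i|$.

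For the easy direction, suppose $M$ is a matching in $\Gamma_i$ of size $|F_i|$, i.e., one that saturates the $F_i$-side of the bipartition. Then for any $F'_i \subseteq F_i$, the matched partners of the elements of $F'_i$ under $M$ are distinct and all lie in $N(F'_i)$, so $|N(F'_i)| \ge |F'_i|$. Taking the contrapositive, if there is some $F'_i$ with $|N(F'_i)| < |F'_i|$, no matching saturating $F_i$ can exist.

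For the harder direction, I would induct on $|F_i|$, assuming Hall's condition $|N(F'_i)| \ge |F'_i|$ for every $F'_i \subseteq F_i$. The base case $|F_i| = 1$ is immediate: the single vertex has at least one neighbor, which can be matched. For the inductive step, I would split into two cases depending on whether Hall's condition holds with slack or is tight on some proper subset. In the slack case, where every nonempty proper subset $F'_i \subsetneq F_i$ satisfies $|N(F'_i)| \ge |F'_i| + 1$, pick any $e \in F_i$, match it to an arbitrary neighbor $e^* \in N(\{e\})$, and delete both from $\Gamma_i$. The slack guarantees that the remaining bipartite graph still satisfies Hall's condition on $F_i \setminus \{e\}$, so induction produces a matching saturating $F_i \setminus \{e\}$, which together with the edge $(e, e^*)$ saturates $F_i$.

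In the tight case, there is a nonempty proper $S \subsetneq F_i$ with $|N(S)| = |S|$. Apply induction to the bipartite subgraph induced on $S \cup N(S)$ (which inherits Hall's condition from $\Gamma_i$) to obtain a matching $M_1$ saturating $S$. Then consider the bipartite subgraph induced on $(F_i \setminus S) \cup (F \setminus N(S))$. The main step is to verify that this residual graph also satisfies Hall's condition: for any $T \subseteq F_i \setminus S$, its neighborhood in the residual graph is $N(T) \setminus N(S)$, and if this had size less than $|T|$, then $N(S \cup T)$ in the original graph would have size less than $|S| + |T| = |S \cup T|$, contradicting Hall's condition. Induction then yields a matching $M_2$ saturating $F_i \setminus S$, and $M_1 \cup M_2$ is the desired matching. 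The main obstacle is getting this Hall's-condition-is-preserved check right in the tight case; once that is in place, everything else follows by bookkeeping.
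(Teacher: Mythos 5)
Your proof is correct: the easy direction (a saturating matching forces $|N(F'_i)| \ge |F'_i|$) is fine, and your inductive argument for the converse is the standard Halmos--Vaughan proof, with the crucial residual-graph check in the tight case done correctly ($|N(S\cup T)| \le |N(S)| + |N(T)\setminus N(S)| < |S| + |T|$ would violate Hall's condition). The main thing to note is that the paper does not prove this lemma at all: it is stated and used as the classical Hall's marriage theorem, a known black box, applied to the auxiliary bipartite graph $\Gamma_i$ between $F_i$ and $F$. The paper's actual work lies elsewhere, namely in verifying that the \emph{hypothesis} of Hall's theorem holds for $\Gamma_i$ --- this is Lemma \ref{lem:hallreduce2}, which shows that a violating set $A \subseteq F_i$ with $|N(A)| < |A|$ would let one replace $F_i$ by the strictly smaller valid fault set $(F_i \setminus A) \cup N(A)$, contradicting minimality of $F_i$. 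So your write-up supplies a self-contained proof of the classical theorem the paper cites, which is perfectly valid but orthogonal to the paper's argument; if your goal is to reconstruct the paper's reasoning, the step you should prove is the verification of Hall's condition from minimality of $F_i$, not Hall's theorem itself.
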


We show that the latter property is implied by minimality of $F_i$, which means the former property is true as well.
See Lemma \ref{lem:hallreduce2} and surrounding discussion for details.

\section{Preliminaries}

\begin{definition}
    Relative to a value of $f$, we call the pair $(q, r)$ \textbf{restorable} if in every graph $G$, any $f$-fault replacement path can be partitioned into $q$ subpaths which are each $r$-fault replacement subpaths in $G$.
\end{definition}

\begin{remark}
    (Monotonicity of Restorability) If $(q, r)$ is restorable, then both $(q+1, r)$ and $(q, r+1)$ are restorable. Equivalently, if $(q, r)$ is not restorable, neither $(q-1, r)$ nor $(q, r-1)$ are restorable.
\end{remark}

For notation, we will commonly write $F$ for a set of $f$ failing edges, and $s, t$ for the endpoint nodes of the replacement path under consideration, which is then written $\pi(s,t \mid F)$.
To denote a subpath between intermeidate nodes $u$ and $v$, we write $\pi(s,t \mid F)[u,v]$.
We will denote the $q$ $r$-fault replacement subpaths as $\pi(x_i, x_{i+1} \mid F_{i+1})$ with $x_0 := s$ and $x_q := t$, and each fault subset $|F_i| \leq r$. Then a decomposition of $\pi(s,t \mid F)$ satisfying restorability can be written 
\[\pi(s,t\mid F) = \pi(x_0, x_1\mid F_1) \circ \pi(x_1, x_2\mid F_2) \circ \ldots \circ \pi(x_{q-1}, x_q\mid F_q).\]
Equivalently, for each $i$, 
\[\pi(s,t \mid F)[x_{i}, x_{i+1}] = \pi(x_{i}, x_{i+1} \mid F_{i+1}).\]

\section{Lower Bounds}\label{lower}

As a warmup, we begin by showing our lower bound against decomposition into $2$ faults.
\begin{proposition}\label{2subpath}
For all $f \geq 2$, $(2, f-2)$ is not restorable.
\end{proposition}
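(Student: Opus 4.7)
The plan is to exhibit, for each $f \geq 2$, a concrete graph $G$ and an $f$-fault replacement path $\pi$ in $G$ such that any binary partition $\pi = \pi_1 \circ \pi_2$ fails to produce two subpaths each of which is an $(f-2)$-fault replacement path. I would begin with the base case $f = 2$ because it is canonical and captures the mechanism that the general case must emulate: take $G$ to be $K_4$ minus the edge $\{s,t\}$ on vertex set $\{s, v_1, v_2, t\}$, let $\pi = s - v_1 - v_2 - t$, and let $F = \{\{s,v_2\}, \{v_1, t\}\}$. A direct check shows $\pi$ is the unique shortest $s$-$t$ path in $G \setminus F$, so $\pi$ is a $2$-fault replacement path. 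The only two possible splits are at $v_1$ and $v_2$. In either case, the non-trivial subpath (of length $2$) admits the length-$1$ alternate through the corresponding chord, so it requires at least $1$ fault to be a shortest path between its endpoints. Since $f - 2 = 0$ here, neither 2-partition is a valid decomposition, establishing the proposition at $f = 2$.

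For general $f \geq 3$, I would extend the construction to a longer path $\pi$ together with a carefully placed set of $f$ chord-edges, chosen to satisfy two properties: (i) removing all $f$ chords makes $\pi$ the shortest $s$-$t$ path, so $\pi$ is an $f$-fault replacement path; and (ii) for every internal split vertex $x \in \pi$, at least one of the subpaths $\pi[s,x]$, $\pi[x,t]$ admits at least $f - 1$ pairwise edge-disjoint alternate paths that are strictly shorter and use edges outside that subpath. Property (ii) forces, by a hitting-set / Menger-type argument, that the minimum number of faults needed to make the chosen subpath a shortest path is at least $f - 1 > f - 2$, contradicting the supposed decomposition.

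The hard step is (ii): arranging $f$ chords so that the edge-disjoint-alternate count is at least $f - 1$ on one side at every possible split. I expect to handle this with a case analysis on the position of the split vertex $x$: splits near $s$ force the right subpath to absorb the alternates (these must come from chords with endpoints far from $s$); splits near $t$ are symmetric; and middle splits require the chord set to interlock across the midpoint. The main obstacle is keeping these local counts at least $f - 1$ simultaneously at every position while also holding $|F|$ at exactly $f$ --- adding chords to strengthen one region tends to increase the total fault requirement and thereby push $\pi$ out of the $f$-fault regime. I would resolve this by ensuring that chord-induced alternates share edges cleverly at the whole-$\pi$ scale (so the overall minimum hitting set stays at $f$) but are edge-disjoint when restricted to any single subpath's complement (so the per-side count is $\geq f-1$).
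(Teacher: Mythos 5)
Your base case $f=2$ is fine, but it is not where the content of the proposition lies. For general $f$ you do not actually give a construction: property (ii), which you yourself identify as ``the hard step,'' is exactly the whole difficulty, and the proposal only describes desiderata (``arrange the chords so that\dots'', ``share edges cleverly at the whole-$\pi$ scale but be edge-disjoint per side'') without exhibiting a chord set or verifying the counts. As it stands this is a plan, not a proof. Moreover, the specific route you propose is stronger than what is needed and is in tension with itself: you ask that at every split one side admit $f-1$ \emph{pairwise edge-disjoint} shortcuts (so that a hitting-set argument forces $f-1$ faults on that side), while the global fault budget certifying that $\pi$ is an $f$-fault replacement path is only $f$, and by symmetry the same chord set must simultaneously serve splits on both sides of the midpoint. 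It is far from clear such a family exists, and you give no candidate.

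The paper avoids edge-disjointness entirely. It takes a path (an arc on $2^{g+1}-1$ vertices, $g \approx f/2$) with two interlocking families of chords $E_1, E_2$ whose endpoints sit at geometrically spaced positions $2^k$ and $2^{g+1}-2^{k+2}$; the faults are $F=E_1\cup E_2$ and $\pi$ is the arc. By symmetry the split vertex $x$ may be assumed to lie at or past the midpoint, so the prefix $\pi[1,x]$ contains a ``half-arc.'' Then (a) every single chord of $E_1$ left out of the prefix's fault set yields, by itself plus arc edges, a path to $x$ strictly shorter than the prefix, so all of $E_1$ must be faulted; and (b) any \emph{two} surviving chords of $E_2$ can be combined (up one chord, backwards along the arc, down the other) into a shorter path, so all but one edge of $E_2$ must be faulted. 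Together this forces the prefix's fault set to have size at least $f-1 > f-2$, with no edge-disjointness required --- the shortcuts overlap heavily on arc edges, and the ``all but one'' bookkeeping for $E_2$ is what replaces your disjointness demand. If you want to complete your own approach, you would need to either produce the edge-disjoint family you postulate (and prove both (i) and (ii) for it) or weaken (ii) to an overlapping-shortcuts argument of this kind; without one of these, the proof is incomplete for every $f \geq 3$.
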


\begin{figure}[h]
    \centering
    \includegraphics[scale = 0.5]{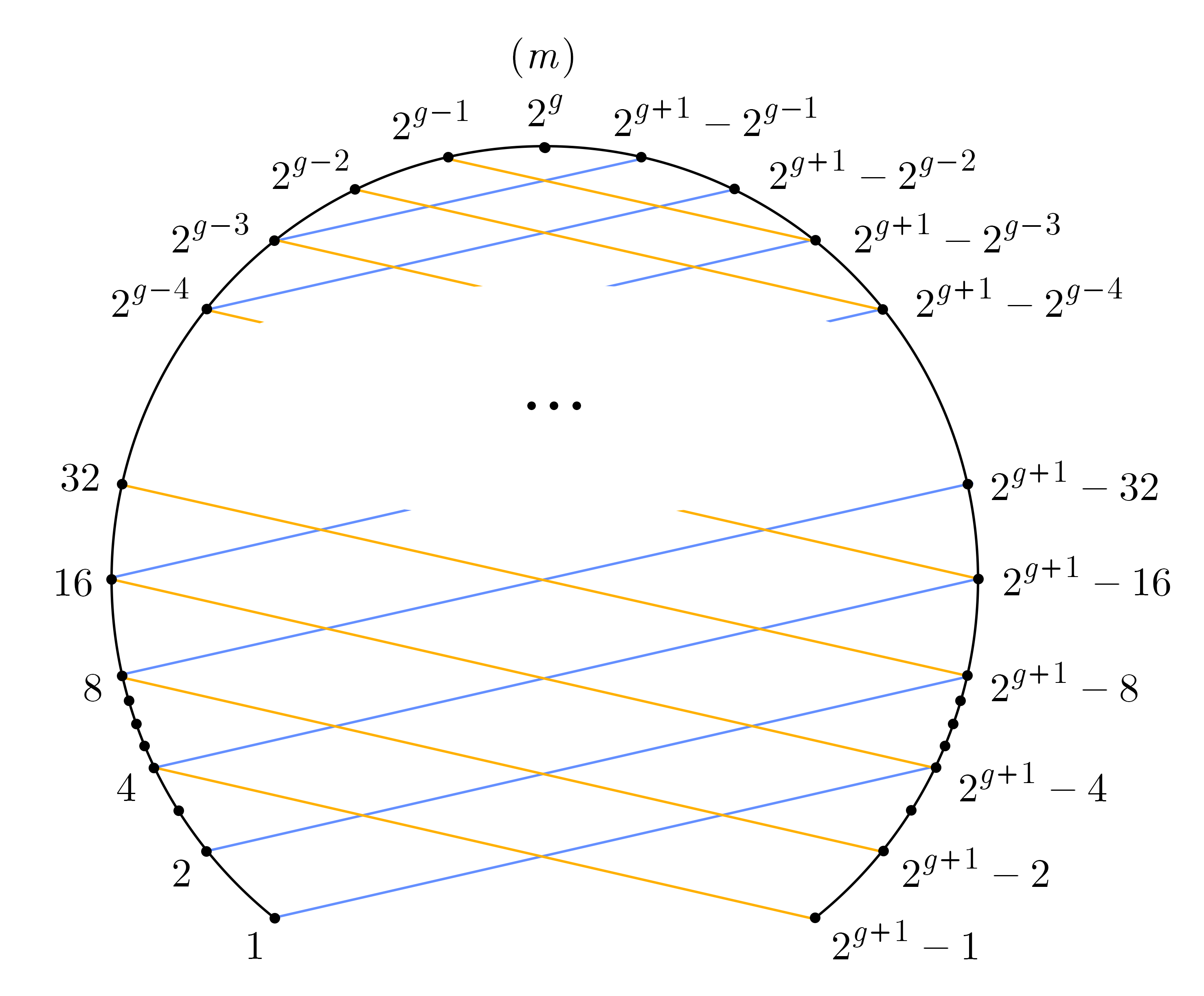}
    \caption{If the blue and yellow edges fail (i.e.\ all straight-line edges on the inside of the outer semicircle), then we can't partition the remaining shortest path (black edges along the outer semicircle) into two subpaths that are both $(f-2)$-fault replacement paths. For clarity, only power of two vertices are drawn here, but the outer cycle contains $2^{g+1}-1$ vertices.}
    \label{fig:lowerboundsimple}
\end{figure}
\begin{proof}
We will first assume for convenience that $f$ is even, and return to the case where $f$ is odd at the end.
Let $g = f/2$ and 
let $G_f$ be the graph as illustrated in Figure \ref{fig:lowerboundsimple}.
Formally: the vertices of $G_f$ are $1, 2, \ldots , N:=2^{g+1}-1$ (labeled clockwise in Figure \ref{fig:lowerboundsimple}),
and its edge set is $E_1 \cup E_2 \cup E_3$, where 
\[E_1 := \{(2^k, 2^{g+1}-2^{k+2}), 0 \leq k \leq g-3\}\]
\[E_2 := \{(2^{k+2}, 2^{g+1}-2^{k}), 0 \leq k \leq g-3\}\]
\[E_3 := \{(i, i+1), 1 \leq i \leq N-1\}\]
In Figure \ref{fig:lowerboundsimple}, the edges in $E_1$ are drawn in blue and slope upwards to the right, and the edges in $E_2$ are drawn in yellow and slope upwards to the left. $E_3$ is in black and forms the outer curve. 
Let $F := E_1 \cup E_2$, and notice there is a unique replacement path $\pi(1, N \mid F)$ which consists of $E_3$, the outer curve. Note that $G_f$ is symmetric about the vertex $m = 2^g$, which is also the midpoint of $\pi(1, N \mid F)$.
Define the ``half-arcs'' of this graph as $\pi(1, m\mid F)$ and $\pi(m, N\mid F)$, the two subpaths partitioning $\pi(1,N\mid F)$ into equal parts divided at midpoint $m$.
(We note that this partitioning will be used again in Lemma \ref{lem:halfarc} as well.)

Let $x \in \pi(1, N \mid F)$ be an arbitrary vertex, which splits the path into a prefix from $1$ to $x$ and a suffix from $x$ to $N$.
Let $F_1, F_2$ respectively be minimum-size fault sets such that the prefix and suffix are replacement paths avoiding $F_1, F_2$.
We will write the prefix and suffix as
$$\pi(1, x\mid F_1) \quad \text{and} \quad \pi(x,N\mid F_2).$$
By symmetry of the construction we may assume without loss of generality that $x \ge m$, and so
$$\pi(1, m \mid F) \subseteq \pi(1, x \mid F_1).$$
We will now proceed to show that $F_1$ must contain every edge in $E_1$ and all but one edge in $E_2$, and hence $|F_1| \geq f - 1$.

\paragraph{First Part (Proof of $E_1 \subseteq F_1$).} 

Consider $\pi(1,x\mid F_1)$; suppose for a contradiction that there is an edge $(2^c, 2^{g+1}-2^{c+2}) \in E_1 \setminus F_1$, with $c \leq g-3$. Then we can construct a path $p$, a shortcut from $1$ to $x$ by traversing through $(2^c, 2^{g+1}-2^{c+2})$ and then using edges in $E_3$ to get to $x$. Explicitly, this path is:
\[p=\left\{\begin{array}{ll}
(1,2, \ldots 2^c) \circ (2^c, 2^{g+1}-2^{c+2}) \circ (2^{g+1}-2^{c+2}, \ldots x+1, x) & \text{ if } x \leq 2^{g+1}-2^{c+2}\\
(1,2, \ldots 2^c) \circ (2^c, 2^{g+1}-2^{c+2}) \circ (2^{g+1}-2^{c+2}, \ldots x-1, x) & \text{ if } x > 2^{g+1}-2^{c+2}.
\end{array}\right.\]
In the first case, the length of $p$ is 
\[|p| = 2^c + 2^{g+1} - 2^{c+2} - x.\]
Since by assumption $x \geq m = 2^g$, we have $2^{g+1} - x \leq 2^g \leq x$, so we can upper bound 
\[|p| \leq 2^c - 2^{c+2} + x.\]
In the second case, the length of the path is
\[|p| = 2^c+2^{c+2} - 2^{g+1} + x.\]
Since $c \leq g-3$, we can directly upper bound the path length as 
\[|p| \leq 2^{g-3} + 2^{g-1} - 2^{g+1} + x.\]
In either case, the length of $p$ is strictly less than $x-1$, the length of shortest path $\pi(1,x \mid F)$, a contradiction. We must therefore have $E_1 \subseteq F_1$.

\paragraph{Second Part (Proof of $|E_2 \setminus F_1| \le 1$):}
Suppose for a contradiction that there are two edges of $E_2$ which $F_1$ does not contain: $(2^{a+2}, 2^{g+1}-2^{a})$ and $(2^{b+2}, 2^{g+1}-2^{b})$ with $a<b$. Then in $G \setminus F_1$ we have a walk $w$ from $1$ to $x$ defined as

\begin{align*}
w = (1, 2, \ldots 2^{a+2}) &\circ (2^{a+2}, 2^{g+1}-2^{a})\circ (2^{g+1}-2^{a}, 2^{g+1}-2^{a}-1, \ldots 2^{g+1}-2^{b})\\
&\circ (2^{g+1}-2^{b}, 2^{b+2}) \circ (2^{b+2}, 2^{b+2}+1, \ldots, x-1, x),
\end{align*}
of length
\begin{align*}
    |w| &=2^{a+2} -2^a + 2^{b} + 1 + x - 2^{b+2}\\
    &= x - 3(2^b - 2^a) + 1
\end{align*}
Then the length of $w$ will be strictly less than $x - 1$, the length of $\pi(1,x \mid F)$, a contradiction. Thus we must include all of $F$ in $F_1$ except at most one edge from $E_2$. 

Finally, in the case that $f$ is odd, we instead construct $G_f$ with $g = \lceil f/2\rceil$, and take any edge out of $E_1$ or $E_2$, which does not change the analysis.
\end{proof}

Our lower bound with two subpaths generalises to our main lower bound result, which we rewrite below:
\begin{proposition}
For any $k \in \nn$,
$(2k, \lfloor {f}/{k} \rfloor - 2)$ is not restorable.

\end{proposition}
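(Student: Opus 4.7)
The plan is to chain $k$ copies of the hard graph from the proof of Proposition \ref{2subpath} in series. Let $f' := \lfloor f/k \rfloor$ (we may assume $f' \ge 2$, since otherwise $f'-2 < 0$ and the statement is vacuous). Form $G$ by taking $k$ disjoint copies $G^{(1)}, \ldots, G^{(k)}$ of the gadget $G_{f'}$ and identifying the rightmost vertex of $G^{(i)}$ with the leftmost vertex of $G^{(i+1)}$ for each $i = 1, \ldots, k-1$. Let $F$ be the union of the shortcut edges across all copies, so $|F| \le k f' \le f$, and let $\pi$ be the concatenation of the $k$ outer arcs. Then $\pi$ is a shortest path in $G \setminus F$, and hence an $f$-fault replacement path in $G$.

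Suppose for contradiction that $\pi$ admits a decomposition into $2k$ subpaths, each an $(f'-2)$-fault replacement path. The $2k+1$ subpath-boundary vertices together with the $k+1$ gadget-boundary vertices share at least the two endpoints of $\pi$, so their union has size at most $3k$. Refining the given decomposition by cutting additionally at every gadget boundary therefore produces at most $3k-1$ pieces, distributed across the $k$ gadgets. By pigeonhole, some gadget arc $\pi^{(i)}$ is split into at most two pieces, each of which is a sub-path of one of the original $2k$ subpaths and is therefore itself an $(f'-2)$-fault replacement path in $G$ (since sub-paths of shortest paths are shortest paths). If only one piece covers $\pi^{(i)}$, I would simply cut it at an arbitrary interior vertex to obtain two $(f'-2)$-fault replacement subpaths.

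The goal is now to reduce this to Proposition \ref{2subpath} applied inside gadget $i$. The main technical step, and in my view the chief obstacle, is to verify that the shortcut-blocking argument from that proposition's proof remains valid for the embedded gadget $G^{(i)} \subseteq G$ rather than for $G_{f'}$ in isolation. The key observation is that because consecutive gadgets are joined in series at single cut vertices, any simple path in $G$ between two vertices of $G^{(i)}$ that leaves $G^{(i)}$ must exit and re-enter through the same boundary vertex, yielding a cycle that cannot shorten a path. Consequently, shortest-path distances between vertices of $G^{(i)}$ are unaffected by the surrounding gadgets, and the fault-set forcing arguments in Proposition \ref{2subpath} (containment of all of $E_1$ and all but one edge of $E_2$) carry over verbatim. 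They force one of the two subpaths covering $\pi^{(i)}$ to have a fault set of size at least $f'-1$, contradicting the assumed bound of $f'-2$ and completing the proof.
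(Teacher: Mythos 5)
Your proposal is correct and follows essentially the same route as the paper: glue $k$ copies of the gadget from Proposition \ref{2subpath} in series at cut vertices, use a pigeonhole count to find one gadget that the decomposition cuts too few times, and then rerun that proposition's fault-forcing argument inside the embedded gadget (your cut-vertex observation, or equivalently restricting the fault set to gadget edges, justifies this step). The only difference is bookkeeping: the paper counts the $2k$ half-arcs against the $2k-1$ interior boundary vertices to find a half-arc contained in a single subpath, while you count at most $3k-1$ refined pieces against $k$ gadgets to find a gadget split into at most two pieces --- these are equivalent, and both yield the same contradiction with a fault budget of $\lfloor f/k\rfloor-2$.
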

\begin{proof}
Assume for convenience that $k$ divides $f$. We will glue $k$ copies of the graph with $f/k$ faults in the previous proposition together, and then show that for any division of a particular $f$-fault replacement path into subpaths, one subpath must contain one of the half-arcs as defined before, and its fault set will have to include $f/k - 1$ faults.
\newline
\newline
We take $k$ copies of $G_{f/k}$ as defined in Proposition \ref{2subpath}, denoted by $G_{1, f/k}, G_{2, f/k}, \ldots G_{k, f/k}$, labeling the vertices of $G_{i, f/k}$ as $(i, j)$ where $j$ is the label of the corresponding vertex in $G_{f/k}$. We identify each $(i, 2^{g+1}-1)$ with $(i+1, 1)$. 
The edges in this graph are the union of all edges of the $G_{i, f/k}$ (see Figure \ref{fig:lowerboundmulti}), and we define $F$ as the union of the fault sets of each $G_{f/k}$ as defined in the proof of Proposition \ref{2subpath}.
\begin{figure} [t]
\begin{center}
    \includegraphics[scale = 1]{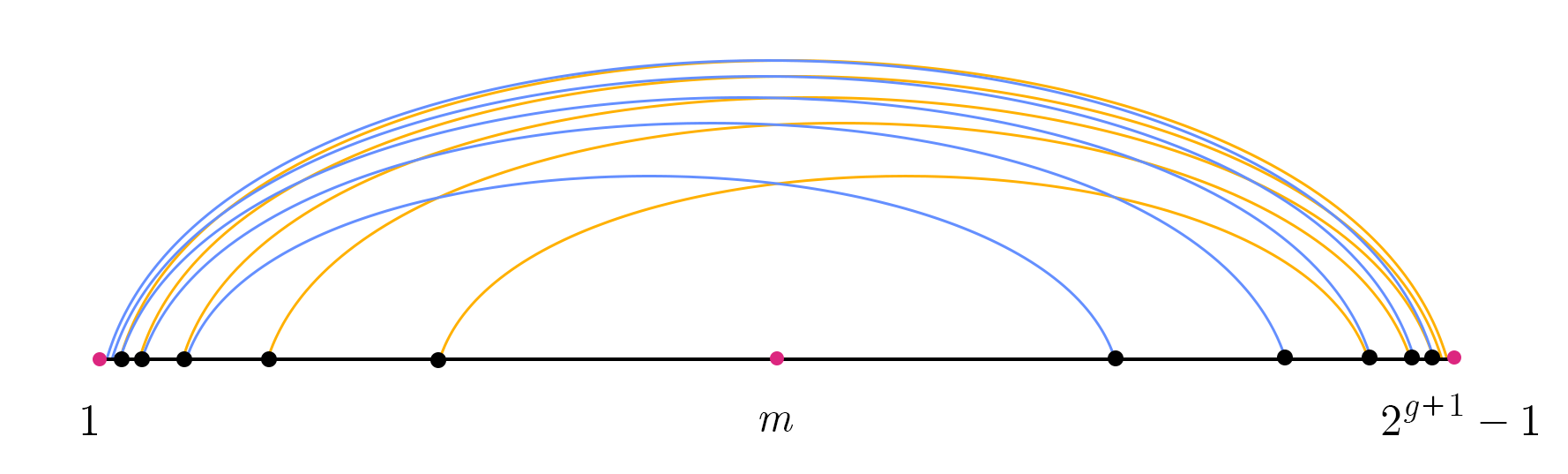}
\end{center}
\begin{center}
    \includegraphics[scale = 0.25]{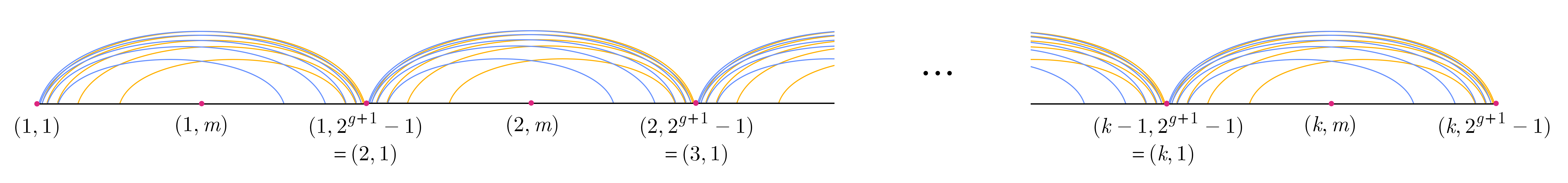}
    \caption{\label{fig:lowerboundmulti} The top figure depicts one copy of $G_{f/k}$, and the bottom depicts all the copies combined together.}
\end{center}
\end{figure}
Let $E_{j,i}$ denote the $E_j$ for $G_{i, f/k}$ with $j \in \{1,2\}$, so that formally \[F := \bigcup_{i=1}^{k} \big( E_{1,i} \cup E_{2, i} \big).\] 
Let $s := (1,1)$, $t := (k, 2^{g+1}-1)$. Consider $\pi(s, t\mid F)$. 
This $f$-fault replacement path is precisely the non-fault edges in $G$, or the union of $E_{3,i}$ over each of the $G_{i,f/k}$.

We now bring in the previous half-arc structure from the case with two subpaths. This graph contains all the half-arcs of each $G_{i,f/k}$, and the half-arcs can be expressed either as $\pi(s,t)[(i,1),(i,m)]$ or $\pi(s,t)[(i,m),(i,2^{g+1}-1)]$. From Proposition \ref{2subpath}, we have the following:
\begin{lemma}\label{lem:halfarc}
    A path containing a half-arc of any $G_{f/k}$ subgraph cannot be a $(f/k-2)$-fault replacement path.
\end{lemma}

\begin{proof}
    Following from the argument of Proposition \ref{2subpath}, a fault replacement path containing a half arc of $G_{i,2f/k}$ must have its fault set contain at least every edge in $E_{1,i} \cup E_{2,i}$ except possibly one. Thus any fault set of that path has size at least $f/k -1$.
\end{proof}

We will show that any division of $\pi(s,t \mid F)$ into $2k$ subpaths will result in one subpath containing a half-arc, and thus failing to be a $(f/k - 2)$-fault replacement path. Suppose we have some choice of boundary vertices $x_1, x_2, \ldots x_{2k-1}$ and corresponding fault subsets $F_1, F_2, \ldots F_{2k}$, so that each $\pi(s,t)[x_{i-1}, x_{i}]$ is a shortest path in $G \setminus F_i$.

Let the \emph{interior vertices} of a path denote all its vertices except its first and last. Note that $\pi(s,t \mid F)$ contains $2k$ half-arcs, and any half-arc which does not have any $x_i$ in its interior vertices will be completely contained in some $\pi(s,t)[x_{j-1}, x_{j}]$. The interior vertices of all $2k$ half-arcs are disjoint, and we only have $2k-1$ $x_i$ which can be in the interior of half arcs. Therefore some subpath $\pi(s,t)[x_{i-1}, x_{i}]$ must contain a half-arc, and its fault set $|F_i|$ must have size at least $f/k -1$. Thus we will always get that one of the subpaths cannot be a $(f/k - 2)$-fault replacement subpath, proving the lower bound.

In the case when $k$ does not divide $f$, we choose graphs which are as even as possible to combine; Let $a$ be the remainder of $f$ divided by $k$. We glue $a$ copies of $G_{\lfloor f/k \rfloor + 1}$ to $(k-a)$ copies of $G_{\lfloor f/k \rfloor}$. In this case the subpath which contains a half-arc might contain a half arc of $G_{\lfloor f/k \rfloor}$, and will enforce a fault set of size only $\lfloor f/k \rfloor - 1$.
\end{proof}

If we want a similar result using this method for the case for an odd number of subpaths, say $2k-1$, we still need to construct $k$ copies of $G_{f/k}$, since half-arcs come in pairs, and we get the same bound on fault sets. Alternatively, we can also use monotonicity to directly get:

\begin{corollary}
    For any $k \in \nn$, $(2k-1, \lfloor {f}/{k} \rfloor - 2)$ is not restorable.
\end{corollary}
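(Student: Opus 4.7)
The plan is to observe that this corollary is an immediate consequence of Proposition 5 combined with the monotonicity remark from the preliminaries. Since the hard work of establishing a lower bound with $2k$ subpaths has already been done in the previous proposition, no new graph construction is required.

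Concretely, I would first recall the statement of monotonicity: if $(q, r)$ is not restorable, then $(q-1, r)$ is also not restorable. Intuitively, this holds because a partition into $q-1$ subpaths trivially yields a partition into $q$ subpaths (by splitting any one subpath into two, or leaving an empty piece), each still being an $r$-fault replacement subpath. So any instance witnessing non-restorability of $(q, r)$ also witnesses non-restorability of $(q-1, r)$.

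The proof then amounts to applying this to $q = 2k$ and $r = \lfloor f/k \rfloor - 2$. Proposition 5 establishes that $(2k, \lfloor f/k \rfloor - 2)$ is not restorable, so by monotonicity $(2k-1, \lfloor f/k \rfloor - 2)$ is not restorable either. That is the entire argument.

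The only thing to watch for is that the alternative approach mentioned in the paragraph before the corollary — trying to construct $k$ copies of $G_{f/k}$ and repeat the half-arc counting argument — does not give anything stronger, because half-arcs come in pairs and $k$ copies contain $2k$ half-arcs whose interiors are disjoint, so with only $2k-2$ interior boundary vertices to distribute among them, some half-arc is still swallowed whole by a subpath, forcing the same bound $\lfloor f/k \rfloor - 1$ on its fault set. Since this yields the same parameters as monotonicity, there is no obstacle and no reason to prefer the longer construction; the monotonicity proof is cleanest. I do not expect any technical difficulty here — the corollary is essentially a free consequence of the previous proposition.
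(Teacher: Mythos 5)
Your proposal is correct and matches the paper's own derivation: the paper explicitly obtains this corollary by applying the monotonicity remark to Proposition 5 (that $(2k, \lfloor f/k \rfloor - 2)$ is not restorable), exactly as you do. Your side remark that the direct $k$-copy half-arc construction gives nothing stronger also agrees with the paper's discussion preceding the corollary.
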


\section{Upper Bound}

We now prove Theorem \ref{thm:intromain}.
Fix any $s,t$ and replacement path $\pi(s,t \mid F)$, where $|F|=: f$. Recall that, to prove Theorem \ref{thm:intromain}, our goal is to show that for any $k \in \rr$, we can partition $\pi(s,t \mid F)$ into $O(k)$ $(f/k)$-fault replacement subpaths. 

\subsection{Subpath Generation}
    We generate $\{x_i\}$, the vertices used to split $\pi(s,t \mid F)$, by traversing along $\pi(s,t\mid F)$ and adding vertices greedily to the current subpath until adding one more vertex would make that subpath no longer an $f/k$-fault replacement path. More precisely, we set $x_0 := s$, and then we pick each $x_{i+1}$ to maximize $|\pi(s,t \mid F)[x_{i}, x_{i+1}]|$ under the constraint that $\pi(s,t \mid F)[x_{i}, x_{i+1}]$ is an $f/k$-fault replacement subpath. We define $x_{i}'$ to be the vertex immediately after $x_i$ along $\pi(s,t \mid F)$, i.e., the vertex satisfying 
    \[|\pi(s,t \mid F)[x_i, x_i']| = 1.\]
    Let $q$ be the number of subpaths generated by this process, and so we have $\{x_1, \dots, x_q\}$.
    We will denote the $q$ corresponding subpaths of $\pi(s,t\mid F)$ by 
\begin{equation*}
\pi_i= \pi(s,t \mid F)[x_{i}, x_{i+1}] \quad \forall 0 \leq i \leq q-1.
\end{equation*} 
\noindent Here we remark that the last subpath $\pi_{q-1}$ differs from the others as it is bounded by the end of $\pi(s,t \mid F)$ and is not generated greedily.

\begin{definition} [Augmented Subpaths]
For $0 \leq i \leq q-2$, We define \textbf{augmented subpaths} $\pi_i'$ as \[\pi'_i := \pi(s,t\mid F)[x_i, x_{i+1}'].\]  Note that no augmented subpath $\pi_i'$ can be an $f/k$-fault replacement subpath in $G$ by greedy choice of $x_i$. 
\end{definition}

For each $i$, fix $F_i$ to be any minimum size fault set such that $\pi'_i$ is a shortest path in $G \setminus F_i$. That is, $\pi'_i$ is an $|F_i|$-fault replacement path but not a $c$-fault replacement path for any $c < |F_i|$.  By our choice of $x_i$, we must have \[|F_i| > f/k.\]

Following the notation of \cite{spanners}, we will denote $\pi_i'$ as \textbf{pre-light} if its length is less than or equal to the length of $\pi_{i-1}'$, and \textbf{post-light} if its length is less than or equal to the length of $\pi_{i+1}'$. From Lemma 3 of \cite{spanners}, at least half of the $\pi_i'$ are pre-light, or at least half are post-light. We will assume without loss of generality that for at least $\frac{q-1}{2}$ $i$, $\pi_i'$ is post-light.
The other case, where at least half of the $\pi_i'$ are pre-light, follows from a symmetric argument.\footnote{In particular, in the case where at least half of the $\pi_i'$ are pre-light, one can use the following argument but substitute ``left ends'' for ``right ends", and ``(left) FS-pairs'' for ``(right) FS-pairs''.}

\subsection{FS-pair Generation \label{sec:fspairs}}
To bound the number of subpaths, we will use a counting argument tracking pairs of augmented subpaths and faults which are ``close'' to each other. In particular, any subpath-fault pair has a fixed distance to each other within the fault-free graph, and only a limited number of subpaths can have small distance to a specified fault without creating a shorter path than $\pi(s,t\mid F)$. We will formalize these ideas using FS-pairs in this section.
\begin{definition}
    For any $(u,v)$-path $p'$, we say a $(u,v)$-path $p$ is a \textbf{shortcut} of $p'$ if $\text{len}(p) < \text{len}(p')$.\footnote{We include the possibility of non-simple shortcuts, which may repeat nodes and be walks.  Our existential upper bound proof would work equally well if we restricted attention to \emph{simple} shortcuts, but this expanded definition will be more convenient for algorithmic reasons outlined in Section \ref{sec:algorithms}.}
\end{definition}
\begin{definition}
    A (right) FS-pair is a pair $(e^*, \pi_i')$ with $e^* \in F$, $\pi_i'$ a post-light augmented path, and the property that there exists a fault-free path from $x'_{i+1}$ to $e^*$ which is contained in a shortcut of $\pi_i'$.
\end{definition}
We assume above that most subpaths are post-light. In the other case when most subpaths are pre-light, we would instead define left FS-pairs, where we require $\pi_i'$ to be pre-light and the fault-free path from $x_i$ to $e^*$, and the argument would be handled symmetrically. 

We will now describe the generation of right FS-pairs, starting by setting up some notation. Fix any post-light augmented subpath $\pi_i'$. For a given $e \in F_i$ which we refer to as the \textbf{generating fault}, let $S_{e}$ be the set of $(x_{i+1}',x_{i})$-shortcuts of $\pi_i'$\footnote{We consider $\pi_i$ a $(x_{i+1}',x_{i})$-path in this section.} which contain $e$.\footnote{Note that $S_{e}$ depends on the choice of subpath $\pi_i'$, although we do not include this parameter in the notation.}
For a shortcut $p \in S_{e}$, we define the \textbf{base fault} $b(p) \in F$ of $p$ as the first fault in $p$ traversing from $x_{i+1}'$ to $x_{i}$. More precisely, we define
\[b(p) := e_{\min\{j: e_j \in F\}} \text{ where } p = x_{i+1}'e_1 v_1 e_2 \ldots e_m x_{i}.\]
For each $e \in F_i$, we define its set of base faults for $\pi_i'$ as \[B(e) := \{b(p): p \in S_{e}\}\]
Finally, we define the family of base faults for $\pi_i'$ as
\[\mathcal{B}(\pi_i') = \{B(e): e\in F_i\}\]
Our next goal will be to choose a distinct base fault from each $B(e) \in \mathcal{B}(\pi_i')$ in order to construct at least $|F_i|$ FS-pairs with $\pi_i'$.

Explicitly, we define an auxiliary bipartite graph $\Gamma_i$ where one side of the bipartition is $F_i$ and the other is the set of faults $F$. For $e \in F$ and $e^* \in F_i$, we have $(e, e^*) \in E(\Gamma_i)$ if and only if $e^* \in B(e)$. In this set up, choosing a distinct base fault for each $B(e)$ is equivalent to finding a matching of $\Gamma_i$ which saturates $F_i$.
Hall's Theorem gives us a condition for this:

\begin{lemma} [Hall's Condition] \label{lem:hall}
If for every $A \subseteq F_i$ we have $|N(A)| \geq |A|$, where $N(A)$ is the neighborhood of $A$ in $\Gamma_i$, then $\Gamma_i$ contains a matching that saturates $F_i$ (and therefore it is possible to choose a distinct base fault for each $B(e)$).
\end{lemma}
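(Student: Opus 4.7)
The statement is literally Hall's marriage theorem applied to the bipartite graph $\Gamma_i$, so my plan is simply to cite it and note why it applies. In more detail, $\Gamma_i$ is a finite bipartite graph with bipartition $(F_i, F)$, and a matching saturating $F_i$ is exactly a set of $|F_i|$ edges whose endpoints on the $F_i$ side are all of $F_i$ and whose endpoints on the $F$ side are distinct. This is precisely the ``system of distinct representatives'' formulation: such a matching exists iff the sets $\{B(e) : e \in F_i\}$ admit an SDR, which by Hall's theorem is equivalent to $|N(A)| \ge |A|$ for all $A \subseteq F_i$.

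If a self-contained proof is desired rather than a citation, I would run the standard augmenting-path argument. Start with any matching $M$ in $\Gamma_i$; if $M$ saturates $F_i$ we are done. Otherwise, pick an unsaturated vertex $e \in F_i$ and consider the set $A$ of vertices in $F_i$ reachable from $e$ by $M$-alternating paths, together with $N(A)$ on the $F$ side. Either there is an $M$-augmenting path, in which case we augment $M$ and repeat, or no such path exists, in which case every vertex of $N(A)$ is matched under $M$ to a vertex of $A \setminus \{e\}$, giving $|N(A)| \le |A| - 1 < |A|$, contradicting Hall's condition. Since each augmentation strictly increases $|M|$ and $|M| \le |F_i|$ is finite, the process terminates with a matching saturating $F_i$.

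The only thing to check is that the reformulation ``a matching saturating $F_i$ yields a distinct base fault for each $B(e)$'' is correct, which is immediate: the matching edge incident to $e$ has the form $(e, e^*)$ with $e^* \in B(e)$, and distinct $e \in F_i$ are mapped to distinct $e^* \in F$ by definition of a matching.

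I expect no real obstacle here, since this lemma is purely an invocation of Hall's theorem; the substantive content of the paper lies in verifying the hypothesis (i.e., showing that $|N(A)| \ge |A|$ follows from minimality of $F_i$), which is the job of the subsequent lemmas referenced in the overview, not of this lemma itself.
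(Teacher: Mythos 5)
Your proposal is correct and matches the paper's treatment: the paper states this lemma as a direct instance of the classical Hall's marriage theorem (giving no separate proof), and—just as you note—the actual work lies in verifying the hypothesis $|N(A)| \ge |A|$, which the paper does via Lemma \ref{lem:hallreduce2} and minimality of $F_i$. Your optional augmenting-path argument and the observation that a saturating matching yields distinct base faults are both standard and correct.
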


We therefore only need to verify the premises of Hall's Condition.  The following lemma will be helpful.

\begin{lemma} \label{lem:hallreduce2}
For any $A \subseteq F_i$, the fault set $F'_i := (F_i \setminus A) \cup N(A)$ is also a valid fault set for $\pi_i'$.
(That is, $\pi_i'$ is a shortest path in $G \setminus F'_i$.)
\end{lemma}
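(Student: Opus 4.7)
The plan is to prove Lemma \ref{lem:hallreduce2} by a short proof by contradiction, leveraging the definition of $N(A)$ (neighbors of $A$ in $\Gamma_i$) as the set of base faults attached to shortcuts through edges of $A$. The two things to verify are that $\pi_i'$ is a path in $G \setminus F_i'$ and that no shorter $(x_i, x_{i+1}')$-path exists there.

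For the first point, I will simply observe that $F_i' = (F_i \setminus A) \cup N(A) \subseteq F$, since $F_i \subseteq F$ and $N(A) \subseteq F$ by construction of $\Gamma_i$; meanwhile $\pi_i'$ is a subpath of $\pi(s,t \mid F)$ and therefore avoids every edge of $F$, and hence of $F_i'$.

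For the second and main point, I will assume for contradiction that there is a $(x_i, x_{i+1}')$-path $p$ in $G \setminus F_i'$ with $|p| < |\pi_i'|$. Since $\pi_i'$ is a shortest path in $G \setminus F_i$, any strict shortcut of $\pi_i'$ in $G$ must use at least one fault in $F_i$; so $p$ contains some $e \in F_i$. Because $p$ avoids $F_i'$, which contains $F_i \setminus A$, this $e$ cannot lie in $F_i \setminus A$, so $e \in A$. Viewing $p$ in the reverse direction from $x_{i+1}'$ to $x_i$, we have $p \in S_e$, and therefore its base fault $b(p)$ lies in $B(e) \subseteq N(A) \subseteq F_i'$. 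But $b(p)$ is by definition an edge of $p$, so $p$ uses an edge of $F_i'$, contradicting $p \subseteq G \setminus F_i'$.

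The only step that requires care is the direction/orientation bookkeeping between the definition of shortcut, the definition of $S_e$ (which fixes $\pi_i$ as a $(x_{i+1}', x_i)$-path in order to make the notion of ``first fault'' well-defined), and the fact that any shortcut of $\pi_i'$ is equally a shortcut when traversed in the reverse direction, with the same edge set and hence a well-defined base fault. Once that is set up, the chain $e \in A \Rightarrow p \in S_e \Rightarrow b(p) \in N(A) \Rightarrow b(p) \in F_i'$ closes the contradiction immediately, so I do not anticipate any serious technical obstacle here.
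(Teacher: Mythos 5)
Your proposal is correct and follows essentially the same argument as the paper: every shortcut of $\pi_i'$ must contain some $e \in F_i$, and either $e \in F_i \setminus A \subseteq F'_i$ or $e \in A$, in which case the shortcut's base fault lies in $B(e) \subseteq N(A) \subseteq F'_i$; your contradiction framing and orientation bookkeeping are just a repackaging of the paper's two-case analysis.
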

\begin{proof}
First, we observe that
$$N(A) = \left|\bigcup_{ e \in A} B(e) \right|.$$
This holds because the neighbours of each generating fault $e$ is the set of base faults $B(e)$ it generates.

We now need to prove that no shortcuts for $\pi_i'$ survive in $G \setminus F'_i$. Let $p$ be an arbitrary shortcut for $\pi_i'$ in $G$.
Then it must contain some fault $e' \in F_i$, since $F_i$ is a valid fault set.
There are two cases:
\begin{itemize}
\item If $e' \in F'_i$, then the shortcut $p$ does not survive in $G \setminus F'_i$.

\item Otherwise, suppose that $e' \notin F'_i$, and so in particular $e' \in A$.
In this case, $p$'s base fault $b(p)$ is in $B(e') \subseteq F'_i$, and thus not in $G \setminus F'_i$. 
\end{itemize}
Therefore there are no surviving shortcuts for $\pi_i'$ in $G \setminus F'_i$.
\end{proof}

Notice that Lemma \ref{lem:hall} follows from Lemma \ref{lem:hallreduce2}: since we assume that $F_i$ is a \emph{minimal} fault set, we must have that $|N(A)| \ge A$ for all $A \subseteq F_i$, since otherwise we would have $|F'_i| < |F_i|$.
Since Hall's condition holds, over any augmented subpath $\pi_i'$, we can assign a unique base fault to every generating fault. Accordingly, we can define an injective function $\phi_i: F_i \to F$ where $\phi_i(e) \in B(e)$.
\newline
\newline
We will construct our FS-pairs for $\pi_i'$ as $\{(\phi_i(e), \pi_i') \mid e \in F_i \}$, and since $\phi_i$ is injective, we get $|F_i|$ distinct FS-pairs from $\pi_i'$. We repeat this process for every post-light augmented subpath. It follows that we will generate at least $(q-1)f/(2k)$ FS-pairs, since we have $(q-1)/2$ post-light augmented subpaths which have corresponding fault sets $F_i$ each with at least $f/k$ faults. 

\subsection{Analysis of FS-pairs}

\begin{lemma}\label{fsc}
    Each fault in $F$ will be in at most 4 FS-pairs. 
\end{lemma}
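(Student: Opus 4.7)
Fix a fault $e^* = (u,v) \in F$. The plan is to show that each of the two endpoints of $e^*$ is the ``landing point'' for at most two FS-pairs, which gives the total bound of four. For each FS-pair $(e^*, \pi'_i)$, the witnessing fault-free path from $x'_{i+1}$ to $e^*$ terminates at either $u$ or $v$ (namely, the endpoint reached just before $e^*$ is used in the witnessing shortcut); fix one such witness per FS-pair, and partition the FS-pairs involving $e^*$ by this endpoint. It suffices to prove that at most two FS-pairs land at $u$, and the case for $v$ is symmetric.

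Suppose for contradiction that three FS-pairs $(e^*, \pi'_{i_1}), (e^*, \pi'_{i_2}), (e^*, \pi'_{i_3})$ with $i_1 < i_2 < i_3$ all land at $u$. Let $q_j$ denote the witnessing fault-free path from $x'_{i_j+1}$ to $u$; it sits inside a shortcut $\sigma_j$ of $\pi'_{i_j}$ in which $e^*$ is the first fault, so $\sigma_j$ has the form $q_j \cdot e^* \cdot (\text{rest}_j)$. Then $q_1 \cdot q_3^{-1}$ is a fault-free walk from $x'_{i_1+1}$ to $x'_{i_3+1}$ of length $|q_1| + |q_3|$. Because $\pi$ is a shortest path in $G \setminus F$, this forces $|\pi[x'_{i_1+1}, x'_{i_3+1}]| \leq |q_1| + |q_3|$, and I will derive a contradiction by proving the reverse strict inequality.

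Two length bounds drive the argument. First, since $q_j$ is immediately followed by $e^*$ inside $\sigma_j$, we have $|q_j| + 1 \leq |\sigma_j|$, and the shortcut inequality $|\sigma_j| < |\pi'_{i_j}|$ gives $|\sigma_j| \leq |\pi'_{i_j}| - 1 = |\pi_{i_j}|$, so $|q_j| \leq |\pi_{i_j}| - 1$. Second, $\pi'_{i_1}$ is post-light (only post-light augmented subpaths appear in FS-pairs), hence $|\pi'_{i_1}| \leq |\pi'_{i_1+1}|$, i.e.\ $|\pi_{i_1}| \leq |\pi_{i_1+1}|$. Combining these,
\[
|q_1| + |q_3| \;\leq\; |\pi_{i_1}| + |\pi_{i_3}| - 2 \;\leq\; |\pi_{i_1+1}| + |\pi_{i_3}| - 2.
\]
Telescoping $\pi$ between $x'_{i_1+1}$ and $x'_{i_3+1}$ shows $|\pi[x'_{i_1+1}, x'_{i_3+1}]| = \sum_{j=i_1+1}^{i_3} |\pi_j|$, and since $i_3 \geq i_1 + 2$ this sum is at least $|\pi_{i_1+1}| + |\pi_{i_3}|$. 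Thus $|q_1| + |q_3| < |\pi[x'_{i_1+1}, x'_{i_3+1}]|$, contradicting that $\pi$ is a shortest fault-free path, and proving the lemma.

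The step I expect to be most delicate is the bookkeeping of augmented-subpath indices: the walk I construct has endpoints at $x'_{i+1}$ (one vertex past each boundary) rather than at $x_{i+1}$, the shortcut inequality involves $|\pi'_{i_j}|$ rather than $|\pi_{i_j}|$, and the ``$-1$'' saving coming from $e^*$ sitting inside $\sigma_j$ is exactly what buys a strict inequality in the minimal case $i_3 = i_1 + 2$. The post-light hypothesis has to be attached to $\pi'_{i_1}$ (the earliest of the three FS-pairs), not to $\pi'_{i_2}$ or $\pi'_{i_3}$; drawing the picture of the three prefixes together with the slice of $\pi$ running between $x'_{i_1+1}$ and $x'_{i_3+1}$ before writing the inequalities would be my strategy for avoiding a miscounted index or a misapplied post-lightness assumption.
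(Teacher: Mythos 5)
Your proof is correct and follows essentially the same argument as the paper: the same split of FS-pairs by which endpoint of $e^*$ the fault-free witness reaches (your ``at most two per endpoint'' is the paper's pigeonhole from five pairs to three sharing an endpoint), the same bound $|q_j| \le |\pi_{i_j}|-1$ from the shortcut containing $q_j$ and $e^*$, the same use of post-lightness of the earliest augmented subpath, and the same telescoping of $\pi$ between $x'_{i_1+1}$ and $x'_{i_3+1}$ to reach the contradiction with $\pi$ being shortest in $G\setminus F$. No gaps; the index bookkeeping you flagged as delicate is handled exactly as in the paper.
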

\begin{proof}
Recall that only post-light $\pi_i'$ will be in FS-pairs. Suppose, for a contradiction, that there is some fault $e = (u,v)$ associated with 5 $\pi_i'$ in FS-pairs. By pigeonhole, at least 3 of the $\pi_i'$ have fault-free paths (as subsets of some shortcut) from their right ends $x_{i+1}'$ to $u$, or at least 3 of the $\pi_i'$ have fault-free paths from their right ends to $v$. Without loss of generality assume this is $u$. Let these subpaths be $\pi_a'$, $\pi_b'$, and $\pi_c'$, with $a < b < c$. We will also label the fault-free paths as $p_a$, $p_b$, and $p_c$. We have \[|p_a| \leq |\pi_a'| - 2 \quad \text{and} \quad |p_c| \leq |\pi_c'| - 2\] since the shortcut of $\pi_a'$ which $p_a$ is on has length at least $|p_a| + 1$ when we include $e$, and same with $p_c$. 
\newline
\newline
Since $\pi_a'$ is post-light, we have \[|\pi_{a+1}'| \geq |\pi_a'|. \]
With each $\pi_i'$ being extended from $\pi_i$ by one vertex, we have also \[|\pi_{a+1}| \geq |\pi_a|.\]
Moreover, since $a < b < c$, $a+1 \neq c$. Note that the distance from $x_{a+1}'$ to $x_{c+1}'$ in $G \setminus F$ is equal to their distance along $\pi(s,t|F)$, a shortest path they're both on, which gives us a lower bound of
\begin{align*}
    d_{G \setminus F}(x_{a+1}', x_{c+1}') &= \pi(s, t \mid F)[x_{a+1}', x_{c+1}'] \\
    &= \sum_{i = a+1}^c |\pi_i|\\
    &\geq |\pi_{a+1}| + |\pi_{c}| \\
    &\geq |\pi_{a}| + |\pi_{c}|.
\end{align*}
However, $p_a$ and $p_c$ give a fault-free path from $x_{a+1}'$ to $ x_{c+1}'$ in $G\setminus F$ also, which upper bounds their distance as 
\begin{align*}
    d_{G \setminus F}(x_{a+1}', x_{c+1}') &\leq d_{G \setminus F}(x_{a+1}', u) + d_{G \setminus F}(u, x_{c+1}')\\
    &\leq |p_a| + |p_c| \\
    &\leq |\pi_a'| + |\pi_c'| - 4 \\
    &= |\pi_{a}| + |\pi_{c}| - 2.    
\end{align*}
Which contradicts $\pi(s,t \mid F)$ being a shortest path. Therefore, each fault in $F$ is associated with at most 4 $\pi_i'$ over all FS-pairs. 
\end{proof}

\noindent
We are now ready to finish the proof of Theorem \ref{thm:intromain}.
We can generate at least $\frac{(q-1)f}{2k}$ FS-pairs, but each fault is in at most 4 FS-pairs, and there are only $f$ faults, so we have
\[4f \geq \frac{(q-1)f}{2k}.\]
Rearranging, we can upper bound $q$, the number of subpaths as \[q \leq 8k + 1.\]
  
\begin{corollary} \label{cor:fsbound}
For any partition of $\pi(s,t \mid F)$ into subpaths $\pi_i$, there are at most $4f$ right FS-pairs containing post-light augmented subpaths $\{\pi_i'\}$. 
\end{corollary}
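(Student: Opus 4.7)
My plan is to observe that Corollary \ref{cor:fsbound} is an immediate counting consequence of Lemma \ref{fsc}, together with the observation that Lemma \ref{fsc}'s proof does not actually depend on the greedy choice of partition. So the proof should take two short steps.

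First, I would revisit the proof of Lemma \ref{fsc} and verify which structural properties of the partition it actually invokes. Scanning the argument, the only properties of $\{\pi_i\}$ that are used are: (i) that $\pi(s,t \mid F)$ is a shortest path in $G \setminus F$, so that the distance between any two $x_{i+1}'$ boundary vertices along $\pi(s,t \mid F)$ equals their $G \setminus F$ distance; (ii) that each $\pi_i'$ in an FS-pair is post-light, giving the length inequality $|\pi_{a+1}| \ge |\pi_a|$; and (iii) that the definition of FS-pair produces, for each witness shortcut, a fault-free path from $x_{i+1}'$ to an endpoint of $e$ of length at most $|\pi_i'|-2$. None of these uses the greedy choice of boundary vertices, so the exact same pigeonhole-plus-contradiction derivation still yields the bound of $4$ FS-pairs per fault for any partition.

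Second, I would conclude by a one-line counting argument: since $|F| = f$ and each fault participates in at most $4$ right FS-pairs by the (now partition-agnostic) statement of Lemma \ref{fsc}, the total number of right FS-pairs containing post-light augmented subpaths is at most $4f$.

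I do not expect a real obstacle here. The only subtle point to make explicit in the write-up is the justification that Lemma \ref{fsc} was really a statement about arbitrary partitions rather than the greedy one; this is done by simply pointing to the three properties above and noting that none of them requires greediness. If one wanted to be extra careful, one could also check that the notion of ``augmented subpath'' $\pi_i'$ (extending $\pi_i$ by one vertex along $\pi(s,t\mid F)$) and the definition of post-light are well-defined for any partition, which they plainly are.
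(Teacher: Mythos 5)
Your proposal is correct and matches the paper's intended argument: the corollary is stated as an immediate consequence of Lemma \ref{fsc} (at most $4$ FS-pairs per fault) times $|F|=f$, and your explicit check that the proof of Lemma \ref{fsc} only uses that $\pi(s,t\mid F)$ is shortest in $G\setminus F$, post-lightness, and the FS-pair definition --- not the greedy choice of boundaries --- is precisely the point that lets the paper reuse the bound for the algorithmic (non-greedy) partition later.
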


Again, in the other case where most subpaths are pre-light, the relevant corollary is that there are at most $4f$ left FS-pairs containing pre-light augmented subpaths $\{\pi_i'\}$.
The proof is essentially identical.

\section{Weighted Upper Bound}

We next prove Theorem \ref{thm:intromainwtd}.
Recall that the goal is to prove that in any weighted graph $G$, every $f$-fault replacement path $\pi$ can be partitioned into
$$\pi = \pi_0 \circ e_0 \circ \pi_1 \circ e_1 \circ \dots \circ e_{q-2} \circ \pi_{q-1}$$ where each $e_i$ is an edge and each $\pi_i$ is a (possibly empty) subpath of $\pi$ that is an $(f/k)$-fault replacement path in $G$, with $q = O(k)$.

Our proof strategy will be similar to the previous argument with some minor changes: we still choose $\pi_i$ greedily as the longest subpath which is an $(f/k)$-fault replacement path, and we will take the next edge in the subpath as the $e_i$ to interweave.
Let $q$ be the number of subpaths resulting from this decomposition; our goal is to upper bound $q$ to be linear in $k$.

We will define $\pi_i'$ as $\pi_i$ augmented with $e_i$ (again $\pi'_q$ is undefined).
We define $x_i$ as the vertex at the end of $\pi_{i-1}'$ and at the beginning of $\pi_i$, so that for any $i$,
\[\pi(s,t \mid F)[x_i, x_{i+1}] = \pi_i' = \pi_i \circ e_i.\] 

Unlike in the unweighted setting, we no longer have overlaps in the $\pi_i'$. We will assess whether subpaths $\pi_i'$ are pre-light or post-light based on their weighted length, and proceed supposing that at least half of the subpaths are post-light. We generate FS-pairs with post-light subpaths as before, using the property that by maximality of $\pi_i$, each $\pi_i'$ necessarily fails to be an $(f/k)$-fault replacement path.
Using the same argument based on Hall's Theorem as before, this guarantees that we get at least $\frac{(q-1)f}{2k}$ distinct FS-pairs.

Now we can complete the proof of Theorem \ref{thm:intromainwtd} by the following lemma to limit the number of FS-pairs each fault is in, which is analogous to Lemma \ref{fsc} and has a similar proof. Theorem \ref{thm:intromainwtd} follows as we can again upper bound $\frac{(q-1)f}{2k}$ by $4f$ to bound $q$.
\begin{lemma}
    Each (weighted) fault in $F$ will be in at most 4 FS-pairs.
\end{lemma}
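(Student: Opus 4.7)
The plan is to adapt the unweighted proof of Lemma~\ref{fsc} essentially verbatim, with edge counts replaced by weighted lengths and with some bookkeeping changes that reflect the fact that, in the weighted setting, the augmented subpaths $\pi_i'$ are edge-disjoint rather than overlapping by one edge. I would argue by contradiction: suppose some fault $e=(u,v)$ appears in at least $5$ FS-pairs. Each such FS-pair attaches to $e$ via a fault-free path from the right end of some post-light $\pi_i'$ to an endpoint of $e$, so by pigeonhole at least three of these pairs attach at the same endpoint; without loss of generality call it $u$, and let the three subpaths be $\pi_a',\pi_b',\pi_c'$ with $a<b<c$ and corresponding fault-free paths $p_a, p_b, p_c$ from $x_{a+1}, x_{b+1}, x_{c+1}$ to $u$.

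Next I would extract the two key weighted inequalities. First, each $p_j$ sits inside a shortcut $\sigma_j$ of $\pi_j'$ which continues through the edge $e$ to reach $x_j$, so $w(p_j)+w(e)\le w(\sigma_j)<w(\pi_j')$, which gives $w(p_a)<w(\pi_a')-w(e)$ and $w(p_c)<w(\pi_c')-w(e)$. Second, post-lightness of $\pi_a'$ gives $w(\pi_{a+1}')\ge w(\pi_a')$ (note that in the weighted case post-lightness compares weighted lengths of the $\pi_i'$ directly, since the subpaths do not overlap and there is no off-by-one adjustment to perform). Because $a<b<c$ we have $c\ge a+2$, so $\pi_{a+1}'$ and $\pi_c'$ are distinct terms in the sum $\sum_{i=a+1}^{c}w(\pi_i')$.

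Now I would combine these to derive a contradiction. Since $\pi(s,t\mid F)$ is a shortest $s$-$t$ path in $G\setminus F$, every subpath is a shortest path in $G\setminus F$ between its endpoints, and because the weighted $\pi_i'$ partition $\pi(s,t\mid F)$ into edge-disjoint pieces,
\[
d_{G\setminus F}(x_{a+1},x_{c+1}) \;=\; \sum_{i=a+1}^{c} w(\pi_i') \;\ge\; w(\pi_{a+1}')+w(\pi_c') \;\ge\; w(\pi_a')+w(\pi_c').
\]
On the other hand, concatenating $p_a$ with the reverse of $p_c$ gives a walk from $x_{a+1}$ to $x_{c+1}$ in $G\setminus F$, so
\[
d_{G\setminus F}(x_{a+1},x_{c+1}) \;\le\; w(p_a)+w(p_c) \;<\; w(\pi_a')+w(\pi_c')-2w(e).
\]
Chaining these yields $w(\pi_a')+w(\pi_c') < w(\pi_a')+w(\pi_c')-2w(e)$, i.e.\ $w(e)<0$, which contradicts positivity of edge weights.

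The main obstacle I anticipate is essentially bookkeeping rather than a new idea: one must verify that the weighted analogue of the telescoping identity $\sum w(\pi_i') \geq w(\pi_a')+w(\pi_c')$ goes through cleanly given that $\pi_i'$ ends at $x_{i+1}$ (not at some $x_{i+1}'$ as in the unweighted case) and that the $\pi_i'$ are edge-disjoint, so the sum along $\pi$ uses $w(\pi_i')$ directly with no correction term. A minor subtlety is that the argument only produces a strict inequality $w(e)<0$ rather than $|\pi|<|\pi|-2$; this requires that edge weights are strictly positive, which is the standard assumption in shortest-path problems and is implicit throughout the paper.
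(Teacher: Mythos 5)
Your proposal is correct and follows essentially the same argument as the paper: the same pigeonhole to three post-light subpaths attaching at a common endpoint of $e$, the same shortcut inequalities $w(p_a) < w(\pi_a') - w(e)$ and $w(p_c) < w(\pi_c') - w(e)$, the same use of post-lightness of $\pi_a'$, and the same lower/upper bounds on $d_{G \setminus F}(x_{a+1}, x_{c+1})$ yielding the contradiction. (One tiny remark: since the final inequality is strict, nonnegative edge weights already suffice; strict positivity is not needed.)
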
 
\begin{proof}
Similarly to Lemma \ref{fsc} we will prove the lemma by showing that no fault can be in 5 FS-pairs. Suppose, for a contradiction, that we have fault $e = (u,v)$ in 5 FS-pairs. Without loss of generality at least 3 subpaths $\pi_a'$, $\pi_b'$, and $\pi_c'$ have fault-free paths which are contained in shortcuts from their right ends $x_{a+1}$, $x_{b+1}$, and $x_{c+1}$ to $u$. Let these paths be $p_a$, $p_b$, and $p_c$. Since each path is contained in a shortcut using $e$, we have 
\[w(p_a) < w(\pi_a') - w(e) \quad \text{and} \quad w(p_c) < w(\pi_c') - w(e).\]
Since $\pi_a'$ is post-light, we have 
\[w(\pi_{a+1}') \geq w(\pi_a').\]
Again we can use that $a+1 < c$ and that $\pi(s,t\mid F)$ is a shortest path to lower bound the weighted distance of $x_{a+1}$ to $x_{c+1}$ in  $G \setminus F$ as 
\begin{align*}
    d_{G \setminus F}(x_{a+1}, x_{c+1}) &= w(\pi(s, t \mid F)[x_{a+1}, x_{c+1}]) \\
    &= \sum_{i = a+1}^c w(\pi_i')\\
    &\geq w(\pi_{a+1}') + w(\pi_{c}') \\
    &\geq w(\pi_{a}') + w(\pi_{c}').
\end{align*}
However we can use the fault free paths of $p_a$ and $p_c$ to upper bound the distance from $x_{a+1}$ to $x_{c+1}$ in $G \setminus F$ to get a contradiction with the previous lower bound:
\begin{align*}
    d_{G \setminus F}(x_{a+1}, x_{c+1}) &\leq d_{G \setminus F}(x_{a+1}, u) + d_{G \setminus F}(u, x_{c+1})\\
    &\leq w(p_a) + w(p_c) \\
    &< w(\pi_a') + w(\pi_c') - 2 w(e). \tag*{\qedhere}
\end{align*}
\end{proof}
In the case that at least half of the subpaths are pre-light, we will generate FS-pairs with pre-light subpaths by defining base faults relative to the left ends $x_i$ of subpaths $\pi_i'$. In the analysis, we replace $x_{a+1}$, $x_{b+1}$, and $x_{c+1}$ with $x_a$, $x_b$ and $x_c$. Our analysis of $p_a$, $p_b$, and $p_c$ are unchanged. Comparing subpaths, we instead use the pre-light property of $\pi_c'$ to get
\[w(\pi_{c-1}') \geq w(\pi_{c}').\]
Then the analysis on the distance is a lower bound of
\begin{align*}
    d_{G \setminus F}(x_{a}, x_{c}) &= w(\pi(s, t \mid F)[x_{a}, x_{c}]) \\
    &= \sum_{i = a}^{c-1} w(\pi_i')\\
    &\geq w(\pi_{a}') + w(\pi_{c-1}') \\
    &\geq w(\pi_{a}') + w(\pi_{c}'),
\end{align*}
and an upper bound of
\begin{align*}
    d_{G \setminus F}(x_{a}, x_{c}) &\leq d_{G \setminus F}(x_{a}, u) + d_{G \setminus F}(u, x_{c})\\
    &\leq w(p_a) + w(p_c) \\
    &< w(\pi_a') + w(\pi_c') - 2 w(e). \tag*{\qedhere}
\end{align*}
\section{Algorithmic Path Decomposition \label{sec:algorithms}}

We will next prove Theorem \ref{thm:intromainalg}, which holds for unweighted input graphs, and then afterwards describe the (minor) changes needed to adapt the algorithm to the weighted setting.
As a reminder of our goal: we are given a graph $G$, a fault set $F$, a replacement path $\pi(s,t \mid F)$, and a parameter $k$ on input.
Our goal is to find nodes $\{x_i\}$ and fault sets $F_i$, which partitions $\pi(s,t \mid F)$ into $q=O(k)$ replacement paths avoiding $f/k$ faults each, as
\[\pi(s,t \mid F) = \pi(x_0,x_1 \mid F_1) \circ \pi(x_1, x_2 \mid F_2) \circ \ldots \circ \pi(x_{q-1}, x_q \mid F_q).\]

\subsection{Fault Set Reducing Subroutine}

Before describing our main algorithm, we will start with a useful subroutine, driven by an observation about the matching step in FS-pair generation. In our upper bound proof, we used a process for generating FS-pairs to bound the number of subpaths in the decomposition. We used \emph{minimum size} of the fault set $F_i$ associated to each augmented subpath $\pi_i'$ to argue that we could generate $|F_i|$ distinct FS-pairs.

The observation is that, letting $F_i$ be \emph{any} (not necessarily minimum) valid fault set for $\pi_i'$ (that is, $\pi_i'$ is a shortest path in $G \setminus F_i$), if we can produce an FS-pair for every fault in $F_i$ then our previous argument works.
On the other hand, if we cannot produce an FS-pair for every fault in $F_i$, then our previous argument gives us a process by which we can find a strictly smaller fault set $F_i'$ that is also valid for $\pi_i'$, by replacing the subset of $F_i$ with the reduced set of their base faults.  

The subroutine \textsc{FaultReduce} runs this process iteratively, in order to find a fault set $F_i$ for the input subpath $\pi_i$ that can be used to generate $|F_i|$ FS-pairs (from both the left and right).
We note the subtlety that $F_i$ is not necessarily a minimum valid fault set for $\pi_i$: as in Figure \ref{fig:minfaultset}, there may exist a smaller valid fault set, but the algorithm will halt nonetheless if it can certify that the appropriate number of FS-pairs can be generated.

\begin{figure} [h]
    \centering
    \includegraphics[scale = 1]{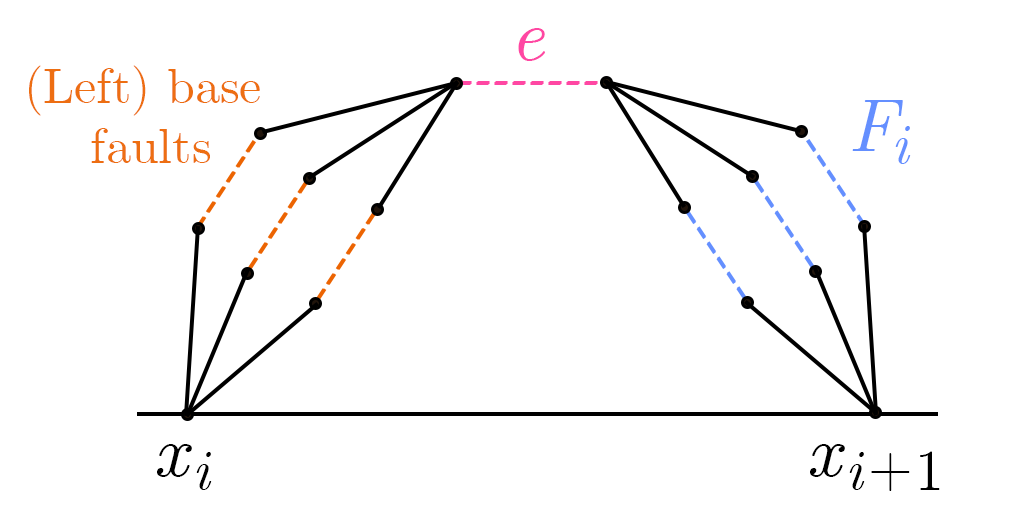}
    \caption{This subpath and fault set $F_i$ produces multiple FS-pairs via a saturated matching using the faults on the left as base faults, but its minimum fault set is only one edge $\{e\}$.}
    \label{fig:minfaultset}
\end{figure}

\begin{algorithm} [h]
\caption{\label{alg:faultreduce} \textsc{FaultReduce} $(\pi_i)$}
\begin{algorithmic}
    \State $F_i \leftarrow F$
    \State Construct $\Gamma_L$.
    \State Construct $\Gamma_R$.
    \Loop
        \State Compute max matching $M_L$ for $F_i$ in $\Gamma_L$.
        \State Compute max matching $M_R$ for $F_i$ in $\Gamma_R$.
        \If {$|M_L| < |F_i|$ or $|M_R| < |F_i|$}
            \State Reduce $F_i$
        \Else\space \Return {$F_i$}
        \EndIf
    \EndLoop
\end{algorithmic}
\end{algorithm}

The essential properties of Algorithm \ref{alg:faultreduce} are captured by the following lemma.

\begin{lemma} \label{lem:faultreducecorrect}
Relative to a graph $G$ and fault set $F$, there is a subroutine (Algorithm \ref{alg:faultreduce} - FaultReduce) that runs in polynomial time with the following behavior:
\begin{itemize}
\item The input is a path $\pi_i$ that is a shortest path in $G \setminus F$.

\item The output is a fault set $F_i \subseteq F$, such that:
\begin{itemize}
\item $\pi_i$ is a shortest path in $G \setminus F_i$, and

\item one can generate $|F_i|$ left- and $|F_i|$ right-FS-pairs of $\pi_i$ from $F_i$.
\end{itemize}
\end{itemize}
\end{lemma}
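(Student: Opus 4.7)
The plan is to verify three things about Algorithm \ref{alg:faultreduce}: a loop invariant ensuring $F_i$ is always a valid fault set for $\pi_i$; termination after polynomially many iterations; and that the exit condition directly yields the desired FS-pair generation property. Lemma \ref{lem:hallreduce2} (and its left-side symmetric version) is the engine driving the reduction.

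First, for the invariant: at the start of each iteration, $\pi_i$ is a shortest path in $G \setminus F_i$. This holds initially since $F_i = F$ and $\pi_i$ is a shortest path in $G \setminus F$ by hypothesis. For the inductive step, suppose the loop invokes ``Reduce $F_i$'' because, say, $|M_R| < |F_i|$. Since $M_R$ is a maximum matching in $\Gamma_R$, the deficiency version of König's theorem (implemented by an alternating BFS from any unmatched generating fault) constructively exhibits a Hall violator $A \subseteq F_i$ with $|N_R(A)| < |A|$. Applying Lemma \ref{lem:hallreduce2} to $A$, the replacement $F_i' := (F_i \setminus A) \cup N_R(A)$ is still a valid fault set for $\pi_i$, and is strictly smaller. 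The case $|M_L| < |F_i|$ is handled symmetrically via the left-side analogue.

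For termination and runtime: each reduction strictly decreases $|F_i|$, so the loop executes at most $|F| = f$ times before both matchings saturate. Within each iteration, the bottlenecks are two bipartite max-matching computations and, if needed, the extraction of a Hall violator via an alternating-path tree; both are standard polynomial-time procedures. The one-time preprocessing — constructing the edge sets of $\Gamma_L$ and $\Gamma_R$ — reduces, for each candidate pair $(e, e^*)$, to deciding whether there exists a walk between the endpoints of $\pi_i$ that (i) is strictly shorter than $\pi_i$, (ii) contains $e$, and (iii) has $e^*$ as its first fault traversing from the appropriate side. This decision can be made in polynomial time by stitching a shortest fault-free prefix from the endpoint of $\pi_i$ to an endpoint of $e^*$ (computed in $G \setminus F$) with a shortest continuation through $e^*$, through $e$, and on to the other endpoint of $\pi_i$; comparing the combined length to $|\pi_i|$ settles whether the edge belongs to $\Gamma_L$ or $\Gamma_R$.

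Finally, the exit condition yields the FS-pair property immediately. When the algorithm returns, both $\Gamma_L$ and $\Gamma_R$ admit matchings of size $|F_i|$ saturating $F_i$, which by definition of these auxiliary graphs gives injective maps $\phi_L, \phi_R: F_i \to F$ assigning each generating fault $e \in F_i$ a distinct base fault $\phi_L(e), \phi_R(e) \in B(e)$ on the left and right respectively. The families $\{(\phi_L(e), \pi_i)\}_{e \in F_i}$ and $\{(\phi_R(e), \pi_i)\}_{e \in F_i}$ are then $|F_i|$ distinct left- and right-FS-pairs as required. The main obstacle is the polynomial-time implementation of the $\Gamma_L, \Gamma_R$ edge tests sketched above; this is where the nontrivial algorithmic content lives, and it is consistent with the paper's remark that the overall procedure is polynomial but not especially efficient.
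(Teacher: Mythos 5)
Your proposal is correct and takes essentially the same approach as the paper's: the validity invariant and reduction step via Lemma \ref{lem:hallreduce2}, termination because each reduction strictly decreases $|F_i|$, extraction of a Hall violator as a by-product of the maximum-matching computation, and the same fault-free-prefix-plus-shortest-continuation distance test (over the edge orientations) to build $\Gamma_L$ and $\Gamma_R$. No gaps worth noting.
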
 

We will next provide additional details on some of the steps in Algorithm \ref{alg:faultreduce}, and then prove Lemma \ref{lem:faultreducecorrect}.

\paragraph{Construction of $\Gamma_L$ and $\Gamma_R$.}
As in our previous proof, the graphs $\Gamma_L, \Gamma_R$ are the graphs representing the association between faults in $F_i$ and left or right (respectively) base faults in $F$.
More specifically:
\begin{itemize}
\item Both $\Gamma_L$ and $\Gamma_R$ are bipartite graphs with vertex set $F_i \cup F$, where $F_i$ is the current fault set, and $F$ is all initial faults.

Thus faults in $F_i$ are represented by two vertices, one on each side of the bipartition.

\item In $\Gamma_L$, we place an edge from $e \in F_i$ to $e_b \in F$ if and only if $e_b$ is a left base fault for $e$.
The edges of $\Gamma_R$ are defined similarly, with respect to right base faults.
\end{itemize}

These graph constructions require us to efficiently check whether or not a particular fault $e_b \in F$ acts as a (left or right) base fault for some $e \in F_i$. 
We next describe this process:
\begin{lemma}
Given a subpath $\pi_i$, a valid fault set $F_i$, and faults $e \in F_i, e_b \in F$, we can check whether or not $e_b$ is a left and/or right base fault of $e$ in polynomial time.
\end{lemma}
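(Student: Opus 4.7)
The plan is to reduce the base-fault test to a constant number of standard shortest path computations. Fix notation so that $\pi_i'$ is traversed from $x_{i+1}'$ to $x_i$, let $e = (u_e, v_e)$ and $e_b = (u_b, v_b)$, and write $L := |\pi_i'|$. By the definition of right base fault, $e_b$ is a right base fault of $e$ if and only if there exists a walk $p$ from $x_{i+1}'$ to $x_i$ of length strictly less than $L$ such that (i) $p$ contains both $e_b$ and $e$, and (ii) $e_b$ is the first edge of $F$ encountered along $p$ when traversed from $x_{i+1}'$.

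Any witnessing walk $p$ admits a three-piece decomposition: an initial fault-free walk from $x_{i+1}'$ to some endpoint $y \in \{u_b, v_b\}$ of $e_b$, the crossing edge $e_b$ to the opposite endpoint $y'$, and then an arbitrary walk in $G$ from $y'$ to $x_i$ that contains the edge $e$. Conversely, any walk of this form is an admissible candidate, since after crossing $e_b$ the walk is unrestricted. Because shortcuts are explicitly allowed to be non-simple walks, the minimum length of the final piece has the closed form
\[
\mu(y', x_i; e) \;:=\; \min\!\bigl(d_G(y', u_e) + 1 + d_G(v_e, x_i),\; d_G(y', v_e) + 1 + d_G(u_e, x_i)\bigr),
\]
and the minimum length over all admissible candidates is
\[
M \;:=\; \min_{(y,y')\in\{(u_b,v_b),\,(v_b,u_b)\}} \Bigl(d_{G\setminus F}(x_{i+1}', y) + 1 + \mu(y', x_i; e)\Bigr).
\]
Thus $e_b$ is a right base fault of $e$ if and only if $M < L$.

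All distances appearing in $M$ are standard: one BFS in $G \setminus F$ from $x_{i+1}'$ yields every value $d_{G \setminus F}(x_{i+1}', \cdot)$, and BFS in $G$ from each of $x_i$, $u_e$, and $v_e$ produces every $d_G$ entry used by $\mu$. The test then reduces to reading off a constant number of precomputed values, which is polynomial time overall. The left base-fault test is handled symmetrically by exchanging the roles of $x_{i+1}'$ and $x_i$ throughout: the fault-free BFS is run from $x_i$, and the final piece targets $x_{i+1}'$ instead.

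The main subtlety I foresee is the interaction with the walk model and a small degenerate case. If we required simple paths through a specified edge, the final-piece subproblem would be NP-hard; but the paper's definition permits walks, which makes concatenation of shortest walks valid and gives the closed form for $\mu$. In the edge case $e = e_b$, the crossing already contains $e$, so one should replace $\mu(y', x_i; e)$ by $d_G(y', x_i)$ to avoid artificially forcing a second traversal of $e$. Both orientations of $e_b$ and of $e$ are accounted for by the two minimums, so the correctness of the overall test follows directly from the structural decomposition.
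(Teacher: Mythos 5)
Your proposal is correct and follows essentially the same route as the paper's proof: expanding your nested minima over the orientations of $e_b$ and $e$ recovers exactly the paper's four-term formula for the shortest walk that uses $e_b$ as its first fault and passes through $e$, computed from fault-free distances out of one endpoint plus unrestricted distances in $G$, with the threshold comparison against the subpath length. Your separate handling of the degenerate case $e = e_b$ also matches the paper's two-term formula for that case.
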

\begin{proof}
First, the following notation will be helpful.
Let $x_i, x_{i+1}$ be the endpoints of the input subpath $\pi_i$.
We will write
$d(x_i, x_{i+1} \mid e_b \leadsto e)$
for the length of the shortest (possibly non-simple) $(x_i, x_{i+1})$-path that contains both $e_b$ and $e$, and which specifically uses $e_b$ as the first fault in $F$ along the path. 
We define $d(x_{i+1}, x_{i} \mid e_b \leadsto e)$ similarly.
Note that $e_b$ is a left base fault for $e$ if and only if
$$d(x_i, x_{i+1} \mid e_b \leadsto e) < |\pi_i|$$
and that $e_b$ is a right base fault for $e$ if and only if
$$d(x_{i+1}, x_{i} \mid e_b \leadsto e) < |\pi_i|.$$
Thus, it suffices to compute the values of the left-hand side of these two inequalities.
We will next describe computation of $d(x_i, x_{i+1} \mid e_b \leadsto e)$; the other computation is symmetric.
There are two cases, depending on whether or not $e_b = e$. Let $e = (u,v)$, $e_b = (u_b,v_b)$.
When $e_b \ne e$, the formula is:
\begin{align*}
      d(x_i, x_{i+1} \mid e_b \leadsto e)  = \min\{&d_{G \setminus F}(x_i, u_b) + d_{G}(v_b, u) + d_G(v, x_{i+1}) + 2, \\
     &d_{G \setminus F}(x_i, u_b) + d_{G}(v_b, v) + d_{G}(u, x_{i+1}) + 2, \\
     &d_{G \setminus F}(x_i, v_b) + d_{G}(u_b, u) + d_{G}(v, x_{i+1}) + 2, \\
     &d_{G \setminus F}(x_i, v_b) + d_{G}(u_b, v) + d_{G}(u, x_{i+1}) + 2
     \}.
\end{align*}
The four parts are needed since we consider paths that use $e, e_b$ with either orientation, and the $+2$ term arises to count the contribution of the edges $e, e_b$ themselves.
In the case where $e_b = e$, the formula is
\begin{align*}
    d(x_i, x_{i+1} \mid e \leadsto e) &= \min\{d_{G\setminus F}(x_i, u) + d_{G}(v, x_{i+1})+ 1, d_{G\setminus F}(x_i, v) + d_{G}(u, x_{i+1})+ 1\}. \tag*{\qedhere}
\end{align*}
\end{proof}

\paragraph{Reducing $F_i$.}

Next, we provide more detail on the step of reducing the fault set $F_i$.
This uses Hall's condition, in an analogous way to our previous proof.
When we compute max matchings $M_L, M_R$ for $\Gamma_L, \Gamma_R$, if we successfully find matchings of size $|M_L| \ge |F_i|$ or $|M_R| \ge |F_i|$, then we have certified the ability to generate $|F_i|$ left and right FS-pairs as in Section \ref{sec:fspairs}, and so the algorithm can return $F_i$ and halt.
Otherwise, suppose without loss of generality that $|M_L| < |F_i|$.
By Hall's condition, that means there exists a fault subset $A \subseteq F_i$ such that the set of base faults $B \subseteq F$ used by faults in $A$ is strictly smaller than $A$ itself.
For the reduction step, we set $F_i \gets F_i \cup B \setminus A$, which reduces the size of $|F_i|$.
By Lemma \ref{lem:hallreduce2}, this maintains the invariant that $F_i$ is a valid fault set for the input path $\pi_i$.

In order to efficiently find the non-expanding fault subset $A \subseteq F_i$, we may compute the max matching in $\Gamma_L$ (or $\Gamma_R$) using a primal-dual algorithm that returns both a max matching and a certificate of maximality of this form.
For example, the Hungarian algorithm will do \cite{CLRS}.

\subsection{Main Algorithm}

\textsc{ComputeSubpaths}, described in Algorithm \ref{alg:computesubpaths}, performs a greedy search for subpath boundaries.
In each round, we set the next subpath boundary node $x_{i+1}$ to be the furthest node from the previous subpath boundary node $x_i$, such that the corresponding subpath is certified by the algorithm \textsc{FaultReduce} to have size at most $f/k$.
Thus, considering the augmented subpath that we get by adding an additional node to $\pi_i$, we can generate at more than $f/k$ left and right FS-pairs from this subpath.

We next state the algorithm; for ease of notation we label the vertices of the input path $\pi(s,t \mid F)$ as $v_0, v_1, \ldots, v_\ell$.

\begin{algorithm}[h]
\caption{\textsc{ComputeSubpaths} $(\pi(s,t \mid F), F, k)$}\label{alg:computesubpaths}
\begin{algorithmic}
    \State $x_0 \leftarrow s$.
    \State $i \leftarrow 0$.
    \While {$x_i \neq v_\ell$}
        \State Binary search for largest $y$ such that the fault set returned by \textsc{FaultReduce}$(\pi(s,t)[x_i, v_y])$ has size $\leq f/k$.
        \State $i \leftarrow i+1$.
        \State $x_i \leftarrow v_y$.
    \EndWhile\\
    \Return $\{x_j\}_{j=0}^i$
\end{algorithmic}
\end{algorithm}
\begin{theorem}
Algorithm 2 is correct and runs in polynomial time. 
\end{theorem}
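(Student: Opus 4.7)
The plan is to verify three things: that each returned subpath really is an $(f/k)$-fault replacement path, that there are only $O(k)$ of them, and that the total running time is polynomial in $n$ and $f$.

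For the first point, each iteration of \textsc{ComputeSubpaths} invokes \textsc{FaultReduce} on the prefix $\pi(s,t)[x_i, v_y]$ and, by the binary search criterion, accepts only if the returned fault set $F_{i+1}$ satisfies $|F_{i+1}| \le f/k$. By Lemma \ref{lem:faultreducecorrect}, $\pi(s,t)[x_i, v_y]$ is then a shortest path in $G \setminus F_{i+1}$, which is exactly what it means to be an $(f/k)$-fault replacement path. To check termination, I would verify that the binary search always finds at least $y = i+1$: the single-edge subpath $\pi(s,t)[v_i, v_{i+1}]$ has length $1$, hence is trivially a shortest path in $G \setminus \emptyset$, so \textsc{FaultReduce} returns a fault set of size $0 \le f/k$. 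Thus $x_i$ advances by at least one vertex per outer iteration and the partition covers all of $\pi(s,t \mid F)$.

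The bound $q = O(k)$ will follow by recycling the counting argument of Section 4, but applied to the fault sets returned by \textsc{FaultReduce} rather than to genuinely minimum-size fault sets. For each non-final subpath $\pi_i$, maximality of $y$ in the binary search guarantees that the augmented subpath $\pi_i' := \pi(s,t)[x_i, v_{y+1}]$ causes \textsc{FaultReduce} to return some $F_i'$ with $|F_i'| > f/k$. The second guarantee of Lemma \ref{lem:faultreducecorrect} is precisely that $|F_i'|$ left- and $|F_i'|$ right-FS-pairs can be generated from $\pi_i'$ using $F_i'$, which is the exact hypothesis needed to feed into the counting argument of Section 4. As before, we split into the post-light case and the pre-light case (one of which accounts for at least $(q-1)/2$ of the augmented subpaths), and we obtain strictly more than $(q-1)/2 \cdot f/k$ FS-pairs on the side in question. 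Corollary \ref{cor:fsbound} still applies because its proof uses only the shortest-path property of $\pi(s,t\mid F)$ and the post/pre-light inequality, giving the upper bound $4f$ on the number of FS-pairs. Combining yields $(q-1)/2 \cdot f/k < 4f$, hence $q \le 8k + 1$.

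For the running time, the outer \textsc{while} loop executes $q = O(k) \le O(f)$ times. Each iteration performs a binary search of $O(\log n)$ steps over choices of $y$, and each step invokes \textsc{FaultReduce}. Inside \textsc{FaultReduce}, the inner loop reduces $|F_i|$ by at least one each time it does not halt, so it executes at most $f+1$ times; each pass constructs $\Gamma_L, \Gamma_R$ on $O(f)$ vertices, whose edges can be checked via the closed-form formula for $d(x_i, x_{i+1} \mid e_b \leadsto e)$ using pre-computed all-pairs distances in $G$ and $G \setminus F$, and then computes a maximum bipartite matching with Hall certificate via the Hungarian algorithm. Each of these subtasks is polynomial in $n$ and $f$, so the total runtime is polynomial.

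The main conceptual obstacle is that \textsc{FaultReduce} is not guaranteed to return a minimum-size fault set (as highlighted by Figure \ref{fig:minfaultset}), so the counting argument from the existential upper bound cannot be invoked as a black box. The key insight to resolve this is that the counting argument in Section 4 never actually needs minimality of $F_i$; it only needs the ability to generate $|F_i|$ distinct left or right FS-pairs, which is exactly the halting guarantee of \textsc{FaultReduce}. Once this connection is made, the proof essentially re-runs the existential analysis on the algorithm's output.
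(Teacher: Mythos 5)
Your proposal is correct and follows essentially the same route as the paper: it reuses the FS-pair counting argument (Corollary \ref{cor:fsbound}) with the guarantee of Lemma \ref{lem:faultreducecorrect} standing in for minimality of the fault sets, which is exactly the paper's key observation, and the termination and runtime accounting match the paper's (with somewhat more detail). No gaps beyond those already present in the paper's own treatment (e.g., the implicit monotonicity assumption behind the binary search).
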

\begin{proof}
    In Corollary \ref{cor:fsbound} from our upper bound section, we showed that there exist only $O(f)$ total right FS-pairs using post-light subpaths (and, symmetrically, there exist only $O(f)$ left FS-pairs using pre-light subpaths). Since at least half of the augmented subpaths are pre-light or half are post-light, and by Lemma \ref{lem:faultreducecorrect} every augmented subpath can generate at least $f/k$ left and right FS-pairs, altogether we will have at most $O(k)$ subpaths.
    
    For runtime, we always generate a linear number of subpaths, and locating the endpoint of each requires calling the subroutine $\log n$ times. Thus the entire algorithm runs in polynomial time. 
\end{proof}

A similar approach works in the weighted setting, since the method of counting FS-pairs extends to the structure in Theorem \ref{thm:intromainwtd} and upper bounds the number of interweaved subpaths and edges. The construction of auxiliary graphs $\Gamma_L$ and $\Gamma_R$ requires checking the weighted distance, but the matching and FS-pair generation is the same. We change the algorithm to add the next edge into the decomposition of $\pi(s,t \mid F)$ after finding a maximal subpath with fault set at most $f/k$. The upper bound for the number of subpaths based on enough FS-pairs being generated follows from the analysis in Theorem \ref{thm:intromainwtd}.

\section*{Acknowledgments}

We are grateful to Vijaya Ramachandran for helpful references to prior work, and to an anonymous reviewer for exceptionally helpful and thorough comments.

\bibliographystyle{alpha}
\bibliography{main}

\end{document}